\newcommand{\Vis}[1]{\ensuremath{\mathrm{Vis}(#1)}}
\newcommand{\VisT}[1]{\ensuremath{\mathrm{Vis}^T(#1)}}
\newcommand{\VisHug}[2]{\ensuremath{\mathrm{Vis}^#1(#2)}}
\renewcommand{\int}{\ensuremath{\mathrm{int}}\xspace}
\newcommand{\hset}{solution set\xspace}
\newcommand{\invertible}{flexible\xspace}
\newcommand{\noninvertible}{inflexible\xspace}
\newcommand{\pseudoconvex}{pseudo-convex\xspace}
\newcommand{\etal}{{et al.}\xspace}
\newcommand{\mypar}[1]{\medskip\noindent{\bfseries\boldmath#1}}
\newcommand{\hug}{hugging cycle\xspace}
\newcommand{\HUG}{Hugging Cycle\xspace}
\crefname{lemma}{Lemma}{Lemmas}
\crefname{figure}{Figure}{Figures}
\title{Augmenting Plane Straight-Line Graphs to Meet Parity Constraints}
\author{Aleksander Bjørn Grodt Christiansen}{Technical University of Denmark, Denmark }{abgch@dtu.dk}{https://orcid.org/0000-0002-9486-9115}{Supported by the VILLUM Foundation grant 37507 ``Efficient Recomputations for Changeful Problems''}
\author{Linda Kleist}{Universität Potsdam, Potsdam, Germany}{linda.kleist.1@uni-potsdam.de}{https://orcid.org/0000-0002-3786-916X}{}
\author{Irene Parada}{Universitat Politècnica de Catalunya, Spain }{irene.parada@upc.edu}{https://orcid.org/0000-0003-3147-0083}{Serra Húnter fellow and acknowledges the support of Independent Research Fund Denmark grant 2020-2023 (9131-00044B) ``Dynamic Network Analysis'', the Margarita Salas Fellowship funded by the Ministry of Universities of Spain and the EU (NextGenerationEU), and grants PID2019-104129GB-I00 and PID2023-150725NB-I00 funded by MICIU/AEI/10.13039/
	501100011033.}
\author{Eva Rotenberg}{Technical University of Denmark, Denmark}{erot@dtu.dk}{https://orcid.org/0000-0001-5853-7909}{Partially supported by the VILLUM Foundation grant 37507 ``Efficient Recomputations for Changeful Problems'' and the Independent Research Fund Denmark grant 2020-2023 (9131-00044B) ``Dynamic Network Analysis''.}
\authorrunning{A.B.G. Christiansen, L. Kleist, I. Parada, E. Rotenberg}
\keywords{Plane geometric graphs, Augmentation problems, Parity constraints, Geometric paths, Convex geometric graphs}
\begin{document}

\maketitle
\begin{abstract}
	Given a plane geometric graph $G$ on $n$ vertices, we want to augment it so that given parity constraints of the vertex degrees are met.
	In other words, given a subset $R$ of the vertices, we are interested in a plane geometric 
	supergraph $G'$ such that exactly the vertices of $R$ have odd degree in $G'\setminus G$.
	We show that the question whether such a supergraph exists can be decided in 
	polynomial time 
	for two interesting cases. 
	First, when the vertices are in convex position, we present a linear-time algorithm. 
	Building on this insight, we solve the case when $G$ is a plane geometric path in $O(n \log n)$ time. 
	This solves an open problem posed by Catana, Olaverri, Tejel, and Urrutia (Appl.\ Math.\ Comput. 2020). 	
\end{abstract}

\section{Introduction}

A fundamental class of problems in graph drawing concerns augmenting existing embedded graphs, such that the resulting graph has some desired properties. In this paper, we approach the natural question of augmenting graphs to meet degree constraints. Given a geometric graph, that is, a graph together with a planar straight-line embedding, the problem is to augment the graph with straight-line edges such that constraints are met concerning the degrees of vertices in the resulting plane geometric graph. Even in the simplest version of this problem, where the degree constraints are modulo two, even or odd degree, the problem is NP-hard for general graphs~\cite{Catanaetal}, and conjectured to be NP-hard even for trees~\cite{Catanaetal}. In this paper, we resolve the question about the tractability of parity constraint augmentation for geometric paths, showing it to be near-linear time solvable. 

Our solution uses a connection between the parity constraint satisfaction problem and the problem of finding a compatible geometric spanning tree of a geometric graph. We use this connection to characterize a 
class of geometric paths which always admit an augmentation (except for obvious negative cases). The remaining paths share enough structural properties with paths on convex point sets, that we can use the same approach for these as for graphs on points in convex position. For such graphs, we give a surprisingly simple solution that computes the answer via a linear bottom-up traversal of the (arbitrarily rooted) weak dual graph.

\mypar{The Planar Parity Constraint Satisfaction Problem.}
A \emph{geometric graph} $G=(V,E)$ is a graph drawn in the plane 
such that its vertex set $V$ is a point set in general position (no three points are collinear)
and its edge set $E$ is a set of straight-line segments between those points.
A geometric graph is \emph{plane} if no two of its edges cross and it is \emph{convex} if its vertices are in convex position (regardless of its edges). 
The \emph{visibility graph} of a plane geometric graph $G$, denoted by \Vis{G}, is a geometric graph that has $V$ as its vertex set and two vertices $u$ and $v$ share an edge in \Vis{G} if and only if $uv \notin E$ and $uv$ does not cross any edge in $E$ (so $G$ can be augmented by $uv$). 
Note that \Vis{G} might have many edges and can be far from being plane; see \Cref{fig:IntroA} for an example. 

\begin{figure}[hbt]
	\centering
	\begin{subfigure}{.45\textwidth}
		\centering
		\includegraphics[page=1]{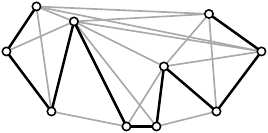}
		\caption{}
		\label{fig:IntroA}
	\end{subfigure}\hfill
	\begin{subfigure}{.45\textwidth}
		\centering
		\includegraphics[page=2]{figures_final}
		\caption{}
		\label{fig:IntroB}
	\end{subfigure}
	\caption{(a) A plane geometric graph $G$ (in black) and its visibility graph \Vis{G} (in gray). (b) A \hset for $G$ and the four unhappy vertices (red squares).}
	\label{fig:Intro}
\end{figure}

We are interested in plane subgraphs $G^+$ of \Vis{G} such that $G\cup G^+$ satisfies given parity constraints for the vertex degrees. 
These constraints can be interpreted as a set of \emph{unhappy} vertices $R$ that would like to change the parity of their degree in $G$. 
We refer to vertices that are not unhappy as \emph{happy}. 
A set of edges $H$ in \Vis{G} is called a \emph{\hset} for $(G,R)$ if $H$ is crossing-free and the vertices that have odd degree in $H$ are exactly the vertices in $R$; see \Cref{fig:IntroB}. 
Note that 
by the handshaking lemma, 
a \hset can only exist if $|R|$ is even.

\mypar{Our contribution.}
In Section~\ref{sec:convex}, we present a linear-time algorithm that, given a convex plane geometric graph $G = (V,E)$ and any set $R$ of unhappy vertices, outputs whether there exists a \hset.
Building on these insights, in Section~\ref{sec:paths}, we present a conceptually elegant $O(|V|\log |V|)$-time algorithm that decides the existence of \hset{s} 
for plane geometric paths, solving an open problem by Catana~\etal\cite{Catanaetal}. We also characterize those paths that admit a positive answer for any even subset $R$ of unhappy vertices.

\mypar{Related work.}
In graph augmentation problems, one is typically interested in adding (few) new edges to a given graph in order to achieve some desired property; for instance, 2-connectivity, a reduced diameter, to obtain an Eulerian graph, or meet other desired vertex degrees. 
These problems arise in many applications like transportation and telecommunication networks and constitute an active research topic in graph theory and algorithms. 
Augmenting graphs to meet connectivity constraints is well-studied in the general case~\cite{BenczurK00,CenLP22,Tarjan,Frank92,Plesnik1976} as well as in the planar case~\cite{KantBodl1991,RutterWolff2012}, and in the plane geometric case~\cite{ABELLANAS2008,AKITAYA2019,AlJubeh2011,Garcia2015,KRANAKIS2010,RutterWolff2012,TOTH2012}. 
Reducing the diameter has also received considerable attention~\cite{AdriaensG22,AlonGR00,ChungG84,FratiGGM15}.

{The study of augmentation to meet parity constraints and, in particular, structures involving the set of odd-degree vertices has a long history. 
\emph{$T$-joins} are subgraphs in which the set of vertices of odd degree is exactly $T$. 
The minimum-weight $T$-join problem can be efficiently solved and generalizes several problems including the Chinese postman problem~\cite{EdmondsJ73,ChinesePostman1}. 
The concept of odd-vertex pairing also appears in classic results~\cite{maxcut,nash-williams_1960}.
}

Augmentation to meet parity constraints has also been considered as a relaxed version of the more general problem of augmenting graphs to meet a certain degree sequence. 
For abstract graphs, Dabrowski~\etal\cite{DABROWSKI2016213} provide a polynomial-time algorithm to achieve parity constraints. 
For both the plane topological setting, where edges are not required to be straight lines, and the plane geometric setting, 
Catana~\etal\cite{Catanaetal} showed that deciding whether a general plane (geometric) graph can be augmented to meet a set of parity constraints is NP-complete; even in more restricted cases where the set of unhappy vertices $R$ is $V$ or all odd-degree vertices 
(i.e., the goal is to make the graph Eulerian).
For maximal outerplane graphs in the topological setting, 
they show how to decide in polynomial time if the parity constrains can be met in all of the cases above. 
They also show that the same holds if one can only augment by adding a matching. 
In contrast, deciding whether a plane geometric cycle 
can be augmented with a (perfect) geometric matching so that 
all vertices change parities 
is known to be NP-complete~\cite{PilzRS020}.

The geometric setting has also been studied in a slightly different light~\cite{Aichholzer2014,Alvarez2015}. Given a set of vertices along with a set of parity constraints for these vertices, Aichholzer~\etal\cite{Aichholzer2014} showed that one can always construct a plane spanning tree and a 2-connected outerplane graph satisfying all of the constraints. 
Garc\'ia~\etal~\cite{GARCIA2014} study the following question: Given a plane geometric tree $T$ on a point set $S$, find a $T$-compatible plane spanning tree on $S$ sharing the minimum number of edges with $T$. 
They show that this corresponds to finding a spanning forest of the visibility graph of $T$ with the minimum number of components.
For certain plane geometric paths, they prove that one can always find a spanning tree of the visibility graph.

\section{Geometric Graphs with a \HUG}
\label{sec:convex}

{For a plane geometric graph $G=(V,E)$, a \emph{\hug} $C$ is a plane geometric (simple) cycle 
with vertex set $V$ 
such that the edges of $G$ lie in the (closed) interior of $C$. 
Note that a plane geometric graph (or path) might not admit a \hug or might admit more than one.
We denote by 
$\VisHug{C}{G}$ the restriction of $\Vis{G}$ to the edges that lie in the (closed) interior of $C$.} 

An important concept towards our result is a \emph{convexly hugging cycle}; a hugging cycle $C$ of a geometric plane graph $G$ such that all bounded faces of $G\cup C$ are convex.
The main result in this section is the following theorem.  

\begin{theorem}
	\label{thm:HUG}
	Let $G=(V,E)$ be a plane geometric graph and let $C$ be a convexly \hug{} of $G$, and let $R\subset V$ be a set of vertices. Then, there is an $O(|V|)$-time algorithm for deciding whether $\VisHug{C}{G}$ contains a subgraph where exactly the vertices of $R$ have odd degree.
\end{theorem}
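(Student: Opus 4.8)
The plan is to exploit the planar structure forced by the convexly \hug{} to reduce the global crossing-freeness requirement to a collection of independent per-face constraints, and then to solve the residual parity problem by a single bottom-up pass over the weak dual tree. Since $C$ has vertex set $V$, the graph $G\cup C$ is an outerplane graph whose outer boundary is $C$ and all of whose vertices lie on $C$; because $C$ is convexly hugging, every bounded face is a convex polygon. I would first record two consequences. First, the weak dual of $G\cup C$ (bounded faces, adjacent when they share an edge of $E$) is a tree $T^*$ with $O(|V|)$ nodes. Second, every edge of $\VisHug{C}{G}$ lies in the interior of exactly one bounded face --- either as a diagonal of that convex face, or as an edge of $C$ that is not in $E$ --- since any visible pair must stay inside a single convex cell. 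As the bounded faces have pairwise disjoint interiors, two visibility edges can cross only if they belong to the same face. Hence a subgraph $H\subseteq \VisHug{C}{G}$ is crossing-free if and only if, for every face $f$, the edges of $H$ inside $f$ form a non-crossing set of chords of the convex polygon $f$. This decouples the planarity constraint completely along faces, which is the conceptual crux.

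With crossings handled per face, the only remaining coupling is that a vertex accumulates degree from all of its incident faces. The faces incident to a fixed vertex $v$ form a contiguous fan, and since consecutive fan faces are separated by a chord at $v$, this fan is a path in $T^*$. Rooting $T^*$ arbitrarily, every fan therefore has a unique topmost face, and all remaining faces of the fan lie in its subtree. For a non-root face $f$ whose parent chord is $e_f=vv'\in E$, every vertex of $f$ other than $v$ and $v'$ has its entire fan contained in the subtree of $f$; only $v$ and $v'$ are shared with the parent side. This suggests a bottom-up dynamic program whose state at each tree edge is merely the pair of parities still ``owed'' at the two endpoints of that chord. Processing $f$ amounts to: given the parities contributed from below at each vertex of $f$, choose a non-crossing subset of the available chords of $f$ so that every vertex except $v,v'$ attains the parity dictated by $R$, and report which residual parities at $v$ and $v'$ can be forwarded to the parent; at the root, no residual is allowed. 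Each such local step runs in time linear in the size of $f$, so the whole traversal is $O(|V|)$.

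The heart of the argument --- and the main obstacle --- is the local lemma that decides, for a single convex face, exactly which parity vectors on its vertices are realizable by a non-crossing subset of its available edges (all diagonals, together with those boundary edges not in $E$). I would prove that a parity vector is realizable precisely when it has even weight on each connected component of the face's available-edge graph, and, crucially, that realizability never requires crossings: given an even target set, one can always pair it up by a nested, hence non-crossing, family of chords and short paths. Globally this yields the clean characterization that $\VisHug{C}{G}$ admits the desired subgraph if and only if every connected component of $\VisHug{C}{G}$ contains an even number of vertices of $R$, and that whenever this condition holds a crossing-free \hset{} exists --- exactly the statement the dynamic program verifies. The delicate points, where the bulk of the case analysis will lie, are the constructive uncrossing inside a face, the careful treatment of triangular faces and of $C$-edges (which may or may not be available), and the verification that propagating only the two chord-endpoint parities through shared fan vertices is both necessary and sufficient.
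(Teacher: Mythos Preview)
Your high-level framework is exactly what the paper does: use that every edge of $\VisHug{C}{G}$ lives inside a single convex face of $G\cup C$, so planarity decouples face by face, and run a bottom-up dynamic program on the weak dual tree whose state at a chord is the pair of residual parities at its endpoints. That part is correct and matches the paper.

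The gap is your local lemma. You claim that inside a convex face $f$ a parity vector is realizable by a non-crossing subset of the available edges if and only if it has even weight on every connected component of the available-edge graph, and that ``realizability never requires crossings.'' This is false. Take a face all of whose boundary edges lie in $E$, so the only available edges are diagonals; concretely let $G$ itself be a convex hexagon and $C=G$. The diagonal graph of a hexagon is connected, so your criterion says every even $R$ is realizable. But set $R=V$: any non-crossing set of diagonals, together with the hexagon, has a weak dual that is a nontrivial tree, hence at least two leaf triangles, each contributing an ear vertex of diagonal-degree~$0$. Thus at least two non-adjacent vertices are forced to have even degree, and $R=V$ has no non-crossing realization. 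Already for a quadrilateral face with all four sides in $E$ your criterion fails: the two diagonals form components $\{1,3\}$ and $\{2,4\}$, so $R=\{1,2,3,4\}$ passes your test, yet realizing it needs both diagonals, which cross. Your global corollary --- that a \hset{} exists iff every component of $\VisHug{C}{G}$ meets $R$ evenly --- fails for the same reason.

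The paper replaces your local lemma by \Cref{lemma:convexCycle,lemma:convexPath,lemma:convexPathsCollection}, distinguishing how many boundary edges of the face are missing from $E$: when none are missing, realizability requires, beyond even parity, that the happy vertices of $f$ contain two non-adjacent ones; when exactly one is missing, it requires a happy interior vertex; when at least two are missing, even parity suffices. This extra obstruction in the cycle case is precisely what your nested-chord uncrossing argument cannot produce. It also forces a second subtlety in the DP that you do not address: a face may have many children whose connector edges are \invertible, and naively enumerating their parity choices is exponential. The paper shows that three or more \invertible{} children already guarantee two non-adjacent happy vertices in $f$, so \Cref{lemma:convexCycle} applies and $f$ is always solvable; only faces with at most two \invertible{} children need explicit enumeration, keeping the per-face work linear.
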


{If $G$ is a convex plane geometric graph, then the edges on the convex hull constitute a (in fact, unique) \hug $C$ and $\VisHug{C}{G}=\Vis{G}$, which, furthermore, is convexly hugging. We thus directly obtain the following corollary:}

\begin{corollary}
	\label{cor:convexPos}
	Let $G=(V,E)$ be a convex plane geometric graph, and let $R\subseteq V$ be the set of unhappy vertices. There exists an $O(|V|)$-time algorithm to decide whether $(G,R)$ admits a solution, 
	assuming the convex hull is part of the input.
\end{corollary}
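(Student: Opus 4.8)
The plan is to derive the corollary as an immediate specialization of \Cref{thm:HUG}, taking the convexly \hug{} $C$ to be the boundary of the convex hull of $V$. Essentially all of the work lies in checking that the hull satisfies the hypotheses of the theorem and that the two decision problems genuinely coincide; once this is done there is nothing left to compute beyond invoking the theorem.

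First I would confirm that $C$ is a \hug{} of $G$. Because $G$ is convex, every point of $V$ is a vertex of the convex hull, so the hull boundary is a simple cycle on the vertex set $V$. Moreover every edge of $G$ is a segment between two points of $V$ and therefore lies inside the convex hull, i.e.\ in the closed interior of $C$; hence $E$ lies in the closed interior of $C$ and $C$ is a \hug. To see that $C$ is \emph{convexly} hugging, note that each bounded face of $G\cup C$ is bounded by points of $V$, which are in convex position; since a simple polygon whose vertices lie in convex position visits them in their cyclic hull order and is therefore convex, every bounded face of $G\cup C$ is convex.

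Next I would establish the identity $\VisHug{C}{G}=\Vis{G}$. Any candidate visibility edge is a segment between two points of $V$ and thus lies in the convex hull, that is, in the closed interior of $C$; restricting $\Vis{G}$ to edges lying in the interior of $C$ therefore removes nothing. As a consequence, a crossing-free subgraph of $\VisHug{C}{G}$ whose odd-degree vertices are exactly $R$ is precisely a \hset{} for $(G,R)$: in both cases one seeks a crossing-free subset of $\Vis{G}$ with odd-degree set exactly $R$, and every such set automatically lies inside $C$.

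Putting these together, the existence of a \hset{} for $(G,R)$ is exactly the decision problem answered by \Cref{thm:HUG} applied to the pair $(G,C)$. Since the convex hull is assumed to be part of the input, no $O(|V|\log|V|)$ preprocessing is needed to obtain $C$, and the theorem's $O(|V|)$-time algorithm gives the stated running time. The only step that is more than routine bookkeeping is the convex-position argument showing that all bounded faces of $G\cup C$ are convex, which is what allows the hull to play the role of the convexly \hug{} demanded by \Cref{thm:HUG}; the remaining claims are direct consequences of the definitions of $\Vis{\cdot}$, $\VisHug{\cdot}{\cdot}$, and \hset.
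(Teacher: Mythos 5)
Your proposal is correct and follows exactly the paper's route: the paper derives the corollary from \Cref{thm:HUG} via the same observations, namely that the convex hull edges form the (unique) \hug{} $C$, that it is convexly hugging, and that $\VisHug{C}{G}=\Vis{G}$. Your write-up merely fills in these verifications in more detail than the paper's one-paragraph remark.
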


We remark that
if $G$ is a connected convex plane geometric graph, the convex hull can be computed in linear time~\cite{Melkman87}. 
In contrast, for disconnected convex plane geometric graphs, computing  
the \hug is lower bounded by $\Omega(|V|\log (|V|))$ 
in the decision tree model of computing. 

Now we present the proof of \Cref{thm:HUG}.
Let $C$ denote the convexly \hug for~$G$. 
In a first step, we consider a geometric subgraph $G_f$ of $G$ induced by a bounded (convex) face $f$ of $G\cup C$, i.e., all edges are on the convex hull. To characterize  the existence of \hset{s} for these cases,
we distinguish three scenarios: 
(i) all edges are present, i.e., $G_f$ is a cycle (\Cref{lemma:convexCycle}); 
(ii) exactly one edge is missing, i.e., $G_f$ is a path (\Cref{lemma:convexPath}); and
(iii) at least two edges are missing, i.e., $G_f$ is a collection of paths and isolated vertices (\Cref{lemma:convexPathsCollection}).

\begin{restatable}{lemma}{convexCycle}
	\label{lemma:convexCycle}
	Let $C=(V,E)$ be a convex plane geometric cycle and $R\subseteq V$.  
	Then $(C,R)$ admits a \hset if and only if {$|R|=0$ or} $|R|$ is even and $V\setminus R$ contains two non-adjacent vertices.
\end{restatable}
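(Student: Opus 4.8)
The plan is to prove both directions of the equivalence, treating the trivial case $R=\emptyset$ (where the empty \hset works) separately from the case $|R|>0$. Since $C$ is a convex cycle, $\Vis{C}$ is exactly the set of diagonals of the convex polygon, so throughout I may regard any \hset as a crossing-free set of diagonals, and its non-emptiness is equivalent to $R\neq\emptyset$.

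For \emph{necessity}, I would first invoke the handshaking lemma: every edge set has an even number of odd-degree vertices, so a \hset forces $|R|$ to be even. The substantive part is to show that, when $R\neq\emptyset$, the happy set $V\setminus R$ contains two non-adjacent vertices. Let $H\neq\emptyset$ be a \hset. Its diagonals subdivide the polygon into convex sub-polygons whose face-adjacency graph (two faces adjacent iff they share a diagonal) is a tree with one node per bounded face; with $k\ge 1$ diagonals this tree has $k+1\ge 2$ nodes, hence at least two leaves. I would then argue that each leaf is an \emph{ear}: a face bounded by a single diagonal and a hull arc, whose hull vertices strictly between the diagonal's endpoints have both incident hull edges on this single face, so they carry no diagonal and thus have degree $0$ in $H$. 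Distinct ears give distinct such degree-$0$ vertices, so $H$ has at least two happy vertices, which rules out $|V\setminus R|\le 1$.

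To finish necessity, note that three or more happy vertices on a hull with $n\ge 4$ already contain a non-adjacent pair, so the only remaining bad configuration is exactly two happy vertices that are moreover adjacent, say $x$ and $y=x^+$. The counting above then forces the tree to be a path with exactly two leaf-ears, each containing exactly one interior hull vertex, and these interior vertices must be $x$ and $y$; but the ear at $x$ must be the triangle on $x^-,x,y$ with diagonal $x^-y$, the ear at $y$ the triangle on $x,y,y^+$ with diagonal $xy^+$, and on the four consecutive vertices $x^-,x,y,y^+$ these two diagonals cross, a contradiction. (The degenerate small-$n$ cases, including $n=3$ where no diagonal exists, are checked directly and agree with the statement.)

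For \emph{sufficiency}, assume $|R|$ even and fix two non-adjacent happy vertices $a,b$; I would construct a crossing-free diagonal set whose odd-degree set is exactly $R$. The guiding idea is to pair the vertices of $R$ and realize each pair by a short crossing-free path of diagonals. Reading $R$ in convex order and pairing cyclically-consecutive elements produces chords spanning pairwise-disjoint arcs, which are automatically crossing-free and make exactly their endpoints odd; this solves every pair whose two vertices are non-adjacent on the hull. The only obstruction is a pair of hull-adjacent $R$-vertices (equivalently, a run of consecutive unhappy vertices), which cannot be joined by a diagonal; such a pair $u,u'$ I would reroute through an \emph{anchor} chosen among $a,b$, adding the two diagonals from the anchor to $u$ and to $u'$: this makes $u,u'$ odd while changing the anchor's degree by an even amount, so the anchor stays happy, and rerouting diagonals from a common anchor form a crossing-free fan. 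The two non-adjacent happy vertices are exactly what guarantee that every problematic pair has an available anchor non-adjacent to both of its vertices. The step I expect to be the main obstacle is precisely making this construction \emph{simultaneously} valid (no hull edges), globally crossing-free, and parity-correct at the anchors: the delicate configurations are a run flanked directly by the two chosen anchors, or runs sitting immediately next to $a$ or $b$, where a naive consecutive pairing can force a hull edge or a crossing and one must instead use a nested pairing or switch anchors. I would organize the verification either by a case analysis on each run's position relative to $a$ and $b$, or by an induction that peels off one fully-resolved run at a time while maintaining the invariant that two non-adjacent happy vertices remain.
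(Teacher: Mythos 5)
Your necessity argument is correct and is essentially the paper's own: the paper also passes to the weak dual tree of $C\cup H$, whose at least two leaves are ears certifying two non-adjacent vertices of degree zero in $H$; your explicit handling of the case of exactly two adjacent happy vertices (the two forced ear diagonals $x^-y$ and $xy^+$ would cross) even fills in a detail the paper leaves implicit. The genuine gap is in sufficiency, which you only sketch, and whose key supporting claim is false as stated. You claim that fixing two non-adjacent happy vertices $a,b$ guarantees that every hull-adjacent unhappy pair has an anchor non-adjacent to both of its members. Take $n=6$ vertices $v_1,\ldots,v_6$ in convex position with $R=\{v_2,v_3,v_5,v_6\}$, so the happy vertices are exactly $v_1$ and $v_4$, which are non-adjacent; your consecutive pairing yields the two problematic pairs $(v_2,v_3)$ and $(v_5,v_6)$, and for $(v_2,v_3)$ the only happy vertices are $v_1$ (adjacent to $v_2$) and $v_4$ (adjacent to $v_3$), so \emph{no} valid anchor exists at all; the same holds for $(v_5,v_6)$. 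A \hset does exist, namely $\{v_2v_6,\,v_3v_5\}$, but it comes from a globally different pairing that ``jumps over'' the happy vertices, not from locally rerouting your pairs. So the difficulty you defer to ``nested pairing or switching anchors'' is not a routine verification; it is the actual content of this direction of the lemma, and your proposed mechanism fails on concrete instances satisfying the hypotheses.

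For comparison, the paper resolves sufficiency with a case split that sidesteps all interaction between repairs: if some happy vertex $v$ has both neighbors happy, then the star $\{vr \mid r\in R\}$ is a \hset (all edges are diagonals since $v$'s neighbors are happy, a fan from a single vertex is crossing-free, and $v$ receives even degree $|R|$); otherwise there are at least two maximal runs of unhappy vertices, and one picks the clockwise-last vertices $u,w$ of two such runs (non-adjacent, each with a happy clockwise neighbor) and builds the double star $\{ur \mid r\in R_u\}\cup\{wr \mid r\in R_w\}$, where $R_u,R_w$ are the unhappy vertices on the two arcs determined by $u$ and $w$, adding the chord $uw$ if needed to correct the parities at $u$ and $w$. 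On my example above this produces exactly the pairing $\{v_2v_6,v_3v_5\}$ that your scheme cannot reach. If you want to salvage your approach, what you must actually prove is that whenever the stated conditions hold, $R$ admits a perfect pairing into non-adjacent pairs whose connecting paths span a non-crossing family of arcs; the paper's single/double star gives this for free, whereas consecutive pairing plus local anchoring does not.
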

\begin{proof}
	First, we show that the conditions are necessary. 
	Suppose there exists a \hset $H$. 
	As noted above, by the handshaking lemma, 
	$|R|$ must be even. 
		If $|V|\le 3$ there exists a \hset if and only if $|R|=0$. 
		If $|V| > 3$, note that for any plane subgraph $G^+$ of \Vis{G}, the weak dual graph $D(G\cup G^+)$ is a non-trivial tree; see \Cref{fig:cycleA} for an illustration. 
	Its (at least) two leaves certify two non-adjacent vertices which are not incident to any edge of $G^+$. 
	Applying this to $H$ proves that there must exist at least two happy vertices which are not consecutive along $G$. 
	\begin{figure}[htb]
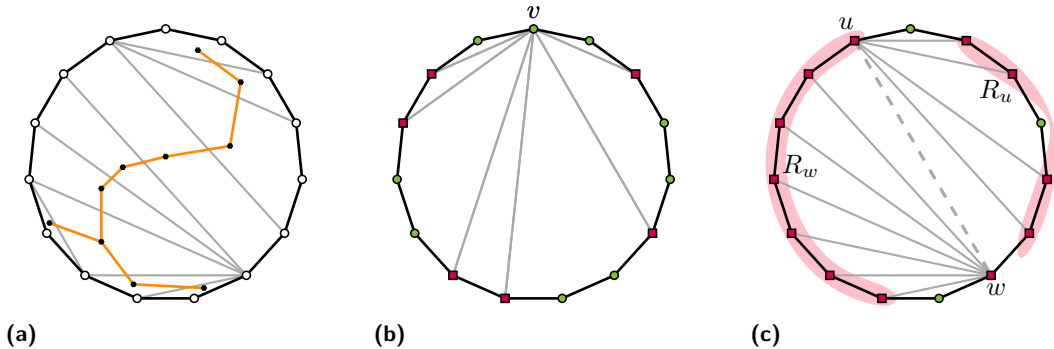

		\centering
		\begin{subfigure}{.3\textwidth}
			\centering
			\includegraphics[page=3]{figures_final}
			\caption{}
			\label{fig:cycleA}
		\end{subfigure}\hfill
		\begin{subfigure}{.3\textwidth}
			\centering
			\includegraphics[page=5]{figures_final}
			\caption{}
			\label{fig:cycleB}
		\end{subfigure}
		\hfill
		\begin{subfigure}{.3\textwidth}
			\centering
			\includegraphics[page=4]{figures_final}
			\caption{}
			\label{fig:cycleC}
		\end{subfigure}
		\caption{{Illustration for \cref{lemma:convexCycle} and its proof. (a) The weak dual graph $D(G\cup G^+)$ (in orange) which is a tree.  (b) A \hset consisiting of a star with center $v$ and vertices of $R$ as leaves.  (c) A \hset formed by a double star with centers $u$ and $w$ and vertices of $R$ as leaves (where the edge $uw$ might be removed if $|R_u|$ is even).}}
		\label{fig:cycle}
	\end{figure}
	
	It remains to show that the conditions are also sufficient {when $|V| > 3$}. 
	Assume that $|R|$ is even and there are at least two non-consecutive happy vertices. 
	We consider two cases: either $V\setminus R$  contains three consecutive happy vertices or there exist two non-adjacent happy vertices $u$ and $w$ 
	such that both paths in $G$ between $u$ and $w$ contain an unhappy vertex.  
	
	In the first case, let $v$ denote a vertex such that $v$ and its neighbors are happy vertices.
	Then, connecting $v$ with the vertices in $R$ defines the \hset  $H:=\{vr \mid r\in R\}$; see \Cref{fig:cycleB}  for an illustration. 
	It is easy to check that all degree constraints of $H$ are met: $v$ has even degree because $|R|$ is even and each vertex $r\in R$ has degree 1 in $H$. 
	
	In the second case, let $u$ and $w$ denote two non-adjacent unhappy vertices in $R$ such that their clockwise neighbor in $G$ is happy. 
	Introducing the artificial chord $uw$ splits $R\setminus\{u,w\}$ into two sets $R_u$ and $R_w$ containing the vertices in $R$ on a clockwise $uw$-path and $wu$-path, respectively. 
	We define $H:=\{ur| r\in R_u\}\cup \{wr| r\in R_w\}$. 
	If $|R_u|$ is odd (implying that $|R_w|=|R|-|R_u|-2$ is odd as well), we add $uw$ to $H$; see also \Cref{fig:cycleC}. 
	It is easy to check that all degree constraints are met and $H$ is a \hset.
\end{proof}

\begin{restatable}{lemma}{convexPath}
	\label{lemma:convexPath} 
	Let $G=(V,E)$ be a convex geometric path with $|V|\ge 3$ and 
	all edges on the boundary of the convex hull. 
	Given $R\subseteq V$, 
	there exists a \hset for $(G,R)$ if and only if $|R|$ is even and $V\setminus R$ contains an internal vertex of~$G$. 
\end{restatable}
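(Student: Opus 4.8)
The plan is to reduce the statement to the cycle case, \cref{lemma:convexCycle}. Write the path as $v_1v_2\cdots v_n$ in convex position, so that its edge set $E$ consists of exactly the hull edges except $e^{*}:=v_1v_n$, and let $C=(V,E\cup\{e^{*}\})$ be the full convex hull cycle. Since no chord of a convex polygon crosses a hull edge, $\Vis{G}$ consists precisely of all non-edges of the path; consequently $\Vis{G}=\Vis{C}\cup\{e^{*}\}$ as a disjoint union: the two graphs offer the same chords, and $G$ additionally offers the boundary segment $e^{*}$. The internal vertices of $G$ are $v_2,\dots,v_{n-1}$, and on $C$ the only hull-adjacent pair of path endpoints is $\{v_1,v_n\}$.

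First I would prove the reduction: $(G,R)$ admits a \hset if and only if $(C,R)$ does or $(C,\,R\triangle\{v_1,v_n\})$ does. Given a \hset $H$ for $G$, either $e^{*}\notin H$, so $H\subseteq\Vis{C}$ is already a \hset for $(C,R)$; or $e^{*}\in H$, and then $H\setminus\{e^{*}\}\subseteq\Vis{C}$ is crossing-free and, since deleting $e^{*}=v_1v_n$ flips the parity of $v_1$ and $v_n$, it is a \hset for $(C,\,R\triangle\{v_1,v_n\})$. Conversely, a \hset for $(C,R)$ is a \hset for $G$ verbatim, and a \hset for $(C,\,R\triangle\{v_1,v_n\})$ becomes one for $G$ after adding the boundary edge $e^{*}$, which crosses no chord; this addition is legal precisely because $e^{*}\in\Vis{G}\setminus\Vis{C}$.

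It then remains to check, via \cref{lemma:convexCycle}, that the disjunction ``$(C,R)$ solvable or $(C,\,R\triangle\{v_1,v_n\})$ solvable'' is equivalent to ``$|R|$ is even and some internal vertex of $G$ is happy''. Both $R$ and $R\triangle\{v_1,v_n\}$ have the parity of $|R|$, so the even-size requirement is common to both. For necessity, suppose no internal vertex is happy, i.e.\ $V\setminus R\subseteq\{v_1,v_n\}$. Then $V\setminus R$ and its symmetric difference with $\{v_1,v_n\}$ both stay inside the $C$-adjacent pair $\{v_1,v_n\}$; such a set never contains two $C$-non-adjacent vertices and, as $n\ge 3$, never equals $V$. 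Hence \cref{lemma:convexCycle} rules out both instances, so $(G,R)$ has no \hset.

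The main work is sufficiency: assuming $|R|$ is even and some internal $v_k$ is happy, I must produce two $C$-non-adjacent happy vertices for at least one of the two instances. If $V\setminus R$ contains any vertex outside $\{v_{k-1},v_k,v_{k+1}\}$, that vertex is $C$-non-adjacent to $v_k$, so $(C,R)$ is solvable. The delicate case, and the only real obstacle, is when all happy vertices cluster in $\{v_{k-1},v_k,v_{k+1}\}$: then $v_1$ and $v_n$ are unhappy, so passing to $(C,\,R\triangle\{v_1,v_n\})$ moves them into the happy set, and I would verify that this enlarged happy set always contains a $C$-non-adjacent pair (for instance $\{v_k,v_1\}$ or $\{v_2,v_n\}$), treating the boundary positions $k\in\{2,n-1\}$ and the small instances $n\le 4$ by hand. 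Beyond this clustered case, everything is bookkeeping on top of \cref{lemma:convexCycle}; the one thing to stay careful about throughout is that ``adjacent'' refers to consecutiveness along the hull cycle $C$, where $v_1$ and $v_n$ are adjacent even though they are the two endpoints of $G$.
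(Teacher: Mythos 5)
Your proposal is correct and follows essentially the same route as the paper: both reduce to \cref{lemma:convexCycle} via the two cycle instances $(C,R)$ and $(C,R\triangle\{s,t\})$, where $s,t$ are the path endpoints, observing that a \hset either uses the missing hull edge $st$ or does not. The only substantive difference is how sufficiency is closed. The paper's argument is cleaner: for $|V|>3$, the happy internal vertex $w$ is non-adjacent in $C$ to at least one endpoint, say $s$; if $s\notin R$, then $\{s,w\}$ is a non-adjacent happy pair certifying $(C,R)$, and if $s\in R$, then $\{s,w\}$ certifies $(C,R\triangle\{s,t\})$. This single branch replaces your case split on whether the happy vertices cluster inside $\{v_{k-1},v_k,v_{k+1}\}$; note that your claim ``then $v_1$ and $v_n$ are unhappy'' in the clustered case holds only for $3\le k\le n-2$, which is precisely why you had to defer $k\in\{2,n-1\}$ and the small instances to hand-checking. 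Those deferred cases do work out with the certificate pairs you name (e.g.\ $\{v_2,v_n\}$ after flipping the endpoints), so there is no gap, but the paper's choice of certificate sidesteps them entirely.
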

\begin{proof}
	Let $s$ and $t$ denote the two endpoints of $G$.
	In comparison to \Cref{lemma:convexCycle}, there exists exactly one more variable of choice, namely to add or not to add the edge $st$ in $H$, see also \cref{fig:convexPath}.
	Let $C$ denote the cycle obtained from $G$ by adding $st$ and define $R'$ from $R$ by switching the happiness status of $s$ and $t$. Clearly, there exists a \hset $H$ for $(G,R)$ if and only if there exists a \hset for $(C,R')$ or for $(C,R)$; the edge $st$ will belong to $H$ in the first case and will not belong to $H$ in the second case. Note that $R$ and $R'$ have the same parity.
		Thus, by  \Cref{lemma:convexCycle}, there exists a \hset $H$ if and only if $|R|=|R'|$ is even and at least one of $V\setminus R$ and $V\setminus R'$ contains two non-adjacent vertices.

	\begin{figure}[htb]
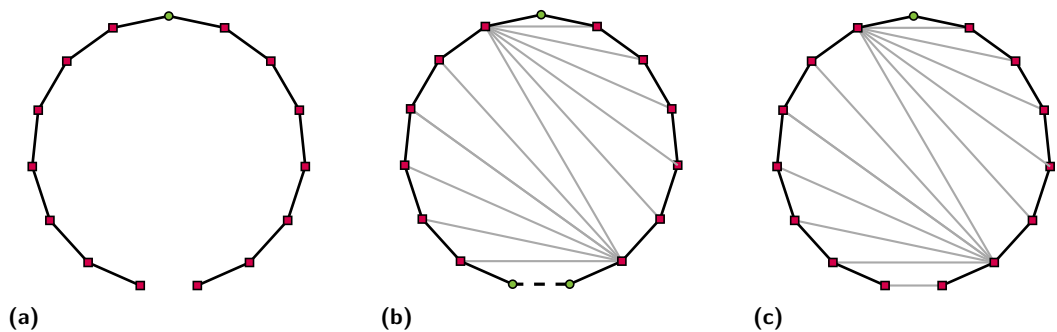

		\centering
		\begin{subfigure}{.3\textwidth}
			\centering
			\includegraphics[page=25]{figures_final}
			\caption{}
			\label{fig:convexPathvA}
		\end{subfigure}\hfill
		\begin{subfigure}{.3\textwidth}
			\centering
			\includegraphics[page=26]{figures_final}
			\caption{}
			\label{fig:convexPathB}
		\end{subfigure}\hfill
		\begin{subfigure}{.3\textwidth}
			\centering
			\includegraphics[page=27]{figures_final}
			\caption{}
			\label{fig:convexPathC}
		\end{subfigure}
		
		\caption{{Illustration for \cref{lemma:convexPath} and its proof. (a) A convex geometric path~$G$ and a set $R$ of unhappy vertices (in red squares). (b) A \hset for the corresponding instance $(C,R')$. (c) A \hset for $G$.}}
		\label{fig:convexPath}
	\end{figure}
	
		If $V\setminus R$ or $V\setminus R'$ contains two non-adjacent vertices, then clearly at least one of them is an internal vertex $w$ of $G$.
		Now suppose that there is an internal happy vertex $w\in V\setminus R$ (and $|R|$ even).
		If $|V|=3$, either $|R|=0$ and hence $(C,R)$ has a \hset or $R=\{s,t\}$ and thus $|R'|=0$ and $(C,R')$ has a \hset.
		If $|V|>3$,  $w$ is not adjacent to $s$ or to $t$, say $s$.  
		If $s\in V\setminus R$, $s$ and $w$ are a certificate for an \hset of $(C,R)$ and if $s\in R$, $s$ and $w$ are a certificate for an \hset of $(C,R')$, implying that there is also a \hset for $(G,R)$.
\end{proof}

\begin{restatable}{lemma}{convexPathsCollection}
	\label{lemma:convexPathsCollection}
	Let $G=(V,E)$ be a collection of at least two geometric paths (could be singleton sets) on a convex point set with $|V|\geq 3$ and all edges on the boundary of the convex hull. 
	Given $R\subseteq V$, 
	there exists a \hset for $(G,R)$ if and only if $|R|$ is even.
\end{restatable}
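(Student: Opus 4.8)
The plan is to prove necessity with the handshaking lemma and sufficiency by reducing to the convex cycle case of \Cref{lemma:convexCycle}. Necessity is immediate: every finite graph has an even number of odd-degree vertices, so a \hset whose odd-degree set is $R$ forces $|R|$ to be even. For sufficiency I would first record a geometric simplification. Since every edge of $G$ lies on the convex hull, no chord of the point set crosses a hull edge (the chord lies in the interior, the hull edge on the boundary) and no two hull edges cross. Hence $\Vis{G}$ is \emph{exactly} the set of pairs $uv$ with $uv\notin E$: all diagonals of the convex polygon together with the \emph{gap} hull edges (the hull edges missing from $G$). In particular every edge of $\Vis{G}$ is crossing-free with all of $G$, so a set $H\subseteq\Vis{G}$ is a \hset for $(G,R)$ if and only if $H$ is internally crossing-free and its odd-degree set equals $R$. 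Because $G$ is a union of at least two paths spanning the hull cycle, the number of components equals $n-|E|\ge 2$, i.e.\ there are at least two gaps.

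Next I would mirror the reduction used for \Cref{lemma:convexPath}, but exploiting the extra freedom coming from several gaps. Write a candidate solution as $H=H_{\mathrm{gap}}\cup H_{\mathrm{diag}}$, where $H_{\mathrm{gap}}$ is a set of gap edges and $H_{\mathrm{diag}}$ a crossing-free set of diagonals. Since gap edges lie on the boundary and diagonals in the interior, $H_{\mathrm{gap}}$ and $H_{\mathrm{diag}}$ are automatically mutually crossing-free, so $H$ is crossing-free exactly when $H_{\mathrm{diag}}$ is. A crossing-free diagonal set with a prescribed odd-degree set is precisely a \hset of the full convex cycle $C$ on $V$, whose visibility graph consists of exactly the diagonals. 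Hence, letting $T$ be the odd-degree set of $H_{\mathrm{gap}}$, the task becomes: choose gap edges $H_{\mathrm{gap}}$ so that $C$ admits a \hset for the target $R'=R\,\triangle\,T$. The odd-degree set of $H$ is then $R'\,\triangle\,T=R$, as required.

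By \Cref{lemma:convexCycle} such a cycle \hset exists if and only if $R'=\emptyset$ or ($|R'|$ is even and $V\setminus R'$ contains two non-adjacent vertices). Since $|T|$ is even, $|R'|\equiv|R|\equiv 0\pmod 2$, so parity is never an obstruction, and it remains only to choose $H_{\mathrm{gap}}$ realizing a target with two non-adjacent happy vertices (or with $R'=\emptyset$). Here I would use the elementary fact that toggling a single gap edge $\{a,b\}$ flips the happiness of exactly its two consecutive endpoints $a,b$, so the $\ge 2$ available gaps give us several reachable targets to choose from.

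The core combinatorial step, and the main obstacle, is to show that two gaps always suffice for this choice. If the empty choice $H_{\mathrm{gap}}=\emptyset$ (target $R$) already yields two non-adjacent happy vertices, I am done; otherwise $V\setminus R$ is one of three degenerate shapes — empty ($R=V$), a single vertex, or two adjacent (consecutive) vertices — and in each case I would toggle one or two gaps to manufacture two non-adjacent happy vertices. For instance, when $R=V$, toggling two distinct gaps makes their (up to four) endpoints happy, among which a non-consecutive pair can always be isolated, except for $n=4$ where the two gaps instead make the target empty; the one-happy and two-adjacent-happy configurations are resolved by toggling a gap disjoint from the happy block, or a gap incident to it, whichever is available. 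The delicate points that force a careful case distinction are precisely the small instances $n\in\{3,4\}$ and the situation where all gaps are clustered around the few happy vertices or share an endpoint; there one must verify that the toggled endpoints are genuinely non-consecutive on the hull, or else that the target collapses to $R'=\emptyset$ (which on a triangle is the only attainable good target). This is exactly where the hypothesis of \emph{at least two} paths is essential: with a single gap one recovers only the weaker guarantee of \Cref{lemma:convexPath}, and with none that of \Cref{lemma:convexCycle}.
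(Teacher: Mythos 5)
Your overall reduction is exactly the paper's: necessity by handshaking, and sufficiency by splitting a candidate solution into gap (hull) edges and diagonals, using the gap edges to toggle parities, and invoking \Cref{lemma:convexCycle} on the resulting target set. Where you diverge is in how the toggle is chosen, and that is where your write-up is not yet a proof. The paper's choice is uniform and avoids your case analysis entirely: for $|V|\ge 4$, pick endpoints $v_1$ of $e_1$ and $v_2$ of $e_2$ that are non-adjacent on the hull (always possible, whether or not the two gaps share a vertex); non-adjacency forces $v_1\notin e_2$ and $v_2\notin e_1$, so the two toggles act independently on $v_1$ and $v_2$, and toggling $e_i$ exactly when $v_i\in R$ yields a target $R'$ of even size in which $v_1$ and $v_2$ are happy and non-adjacent. \Cref{lemma:convexCycle} then finishes, and $|V|=3$ is handled separately, essentially as you do.

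Your alternative---casing on whether $V\setminus R$ is large, empty, a singleton, or two adjacent vertices---can be pushed through, but as written the decisive cases are asserted rather than proven, and you yourself flag them as ``the main obstacle.'' They genuinely need care, because the recipe ``toggle a gap disjoint from the happy block, or a gap incident to it, whichever is available'' is under-determined and some of its instantiations fail. Concretely: with hull order $u,w_1,w_2,v$ on four points, $R=\{u,v\}$, and gaps $uw_1$ and $w_2v$, toggling \emph{both} gaps yields the happy set $\{u,v\}$, which is adjacent and has nonempty target, a dead end; the correct move is to toggle exactly one of them, giving the non-adjacent happy pair $\{u,w_2\}$ or $\{w_1,v\}$. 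Dually, when $V\setminus R=\{w\}$ and both gaps are the hull edges at $w$, toggling a \emph{single} gap fails and one must toggle both. So the number of gaps to toggle depends on the configuration, and the middle gap $w_1w_2$ alone is never a valid choice. Each such check does succeed (your plan is completable), but until they are carried out the core step is missing---and the paper's trick of fixing two non-adjacent gap endpoints up front shows the entire case distinction can be avoided.
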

\begin{proof}
	As before, $|R|$ even is clearly necessary. It remains to show that this is sufficient.
	Let $e_1$ and $e_2$ denote two edges of the convex hull that do not belong to $G$. If $|V|= 3$ and $|R|=2$, $H$ consists either of the edge between the vertices in $R$ (if this edge does not belong to $G$) or of $e_1$ and $e_2$. Hence, we assume that $|V|\geq 4$. In this case, we can choose a vertex $v_i$ from $e_i$ for $i=1,2$ such that $v_1$ and $v_2$ are not adjacent. 
	By adding or not adding $e_i$ to $H$, we can ensure that $v_i$ does not belong to $R$. 
	Then, \Cref{lemma:convexCycle} guarantees the existence of a \hset.
\end{proof}

Using \Cref{lemma:convexCycle,lemma:convexPath,lemma:convexPathsCollection}, we present a linear-time algorithm for any plane geometric graph with a convexly \hug. To this end, we use the concept of the weak dual.
{The \emph{weak dual} of a plane graph is the plane graph that has a
	vertex for each bounded face and an edge between two vertices if and only if the corresponding faces share an edge. 
	For a plane geometric graph $G=(V,E)$ 
	with a convexly \hug $C$, 
	the weak dual graph of $G\cup C$
	is a tree since it is the weak dual of a biconnected outerplanar graph. 
} 
For clarity, we sometimes refer to the vertices of the weak dual as the faces they correspond to.

\begin{proof}[Proof of \Cref{thm:HUG}]
	{
	Consider the convex weak dual graph $D(G)$ of $G \cup C$. 
	As remarked above, $G \cup C$ is a biconnected outerplanar geometric graph and thus, $D(G)$ is a tree. 
	We root $D(G)$ at a leaf $f_0$ and orient all edges towards $f_0$, making $D(G)$ an in-arborescence. See \Cref{fig:convexTreeA} for an illustration on a convex point set, and note that our proof holds in the more general case of plane geometric graphs with a convexly hugging cycle.} 	
	\begin{figure}[htb]
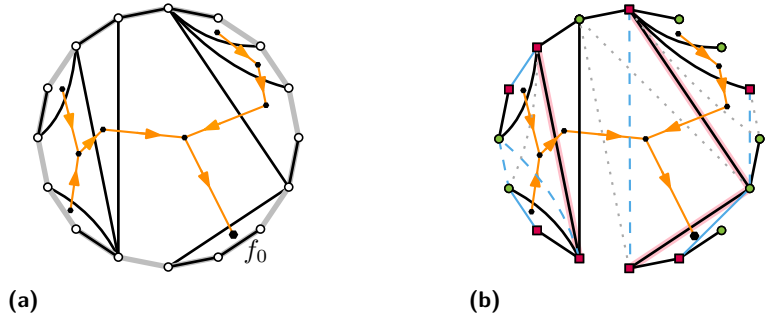

		\centering
		\begin{subfigure}{.3\textwidth}
			\centering
			\includegraphics[page=6]{figures_final}
			\caption{}
			\label{fig:convexTreeA}
		\end{subfigure}\hfil
		\begin{subfigure}{.3\textwidth}
			\centering
			\includegraphics[page=7]{figures_final}
			\caption{}
			\label{fig:convexTreeB}
		\end{subfigure}
		\caption{(a) A plane geometric graph $G$, a convexly \hug (gray), and their dual tree $D(G)$ rooted at $f_0$. (b) A \hset (dashed and solid blue). The highlighted edges (pink) are marked \invertible in the execution of the algorithm.}
		\label{fig:convexTree}
	\end{figure}

For every face $f$ that is not the root, we define the \emph{connector edge} $uv$ to be the edge of $G$ that $f$ shares with its parent in $D(G)$.  
At the root $f_0$, let the connector edge be an edge of $C$ incident to this face. 
Note that since $f_0$ is a leaf of $D(G)$ such an edge exists and it is not shared with the child face.

	The idea of the algorithm is to consider $D(G)$ bottom up, taking care at each step of a face $f$ of $G\cup C$ whose children have already been processed. 
	For a face~$f$ (different from the root) with connector edge $uv$, 
	we define $\Pi(f,p_u,p_v)$ to be the problem asking whether the geometric graph corresponding to the subtree of $D(G)$ rooted at $f$ has a solution that satisfies the given parity constraints, where the parity constraints at $u$ and $v$ are replaced by $p_u$ and $p_v$.     

	To solve $\Pi(f,p_u,p_v)$, we use bottom-up dynamic programming in the tree $D(G)$. Note that the graph $G_f$ restricted to face $f$ has all edges on its convex hull. 
	Hence, \Cref{lemma:convexCycle,lemma:convexPath,lemma:convexPathsCollection} characterize when there exists a \hset. 
	Note that these lemmas are constructive and provide a \hset in linear time when it exists. 
	Consider a child $f'$ of $f$ with connector edge $wx$; 
	we assume that problem $\Pi(f',p_w,p_x)$ has already been solved for all possible values of $p_w,p_x$. 
	We call $f'$ and its connector edge \emph{\noninvertible} if there is only one set of parities $p_w,p_x$ such that $\Pi(f',p_w,p_x)$ has a solution, otherwise we call them \emph{\invertible}. 
	The example in \Cref{fig:sketch-badcase-convex-tree} shows that it is necessary to do the book-keeping of \invertible faces; otherwise we may falsely conclude that there exists no \hset. 
	
	\begin{figure}[htb]
		\centering
		\includegraphics[page=8]{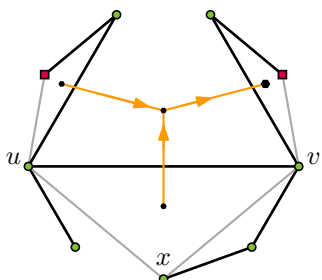}
		\caption{A convex plane geometric tree $T$ (black), a set $R$ (red), and the unique \hset (in gray). 
			In it, $u$ and $v$ are not made happy within the bottom face, while it is trivial to satisfy that face locally.  
			This shows that  just answering whether a face can be satisfied locally is not enough information. 
		}
		\label{fig:sketch-badcase-convex-tree}
	\end{figure}
	
	\smallskip
	
	{
	\noindent\textbf{Case (i):} If $f$ has no \invertible child, then the vertices on face $f$ have no choice. Thus, we can immediately determine whether $\Pi(f,p_u,p_v)$ has a solution by applying \Cref{lemma:convexCycle,lemma:convexPath,lemma:convexPathsCollection} on the instance in $f$ to fulfill the parity constraints from $\Pi(f,p_u,p_v)$. 
	}
	\smallskip
	
	{
	\noindent\textbf{Case (ii):} If $f$ has at most two \invertible children, say with connector edges $wx$ and $yz$, we consider all at most sixteen choices for $p_w,p_x,p_y,p_z\in \{0,1\}$. 
	For each of them, we first check whether there is a solution in the children with these parity constraints. 
	(Note that we only need to consider four choices, because the number of unhappy vertices in any problem needs to be even for there to exist a solution.)
	Then, for each of these choices that is valid for the children, 
	we produce an instance for $f$ where we adjust the happiness status of $w,x,y,z$ so that the combination with $p_w,p_x,p_y,p_z$ gives the desired parities. 
	This instance can be checked using \Cref{lemma:convexCycle,lemma:convexPath,lemma:convexPathsCollection}.
	}
	\smallskip 
	
	{
	\noindent\textbf{Case (iii):} If $f$ has at least three \invertible children, we show that  $\Pi(f,p_u,p_v)$ always has a \hset if  $p_u,p_v$ are such that the total number of unhappy vertices is even; otherwise there cannot be a solution.}
	If $f$ has multiple \invertible children, 
	trying all parity combinations for the vertices could lead to exponentially many cases.  
	We are saved by the following observation: 
	when a face contains at least three \invertible edges, 
	then there is a choice for them that produces two non-consecutive vertices on $f$ that do not need to change their parities in $f$ to satisfy $\Pi(f,p_u,p_v)$. 
	\Cref{lemma:convexCycle} then ensures that any of the two options of parity constraints for $u$ and $v$ such that the total number of unhappy vertices in the problem is even, are satisfiable in $f$.

	For a face $f$, if $\Pi(f,p_u,p_v)$ admits a solution for more than one value of $p_u,p_v$ (there are at most two such problems that can be solvable), 
	$f$ and the connector edge $uv$ are marked \invertible and, if $f \ne f_0$, we proceed to a new face. For an example consider \cref{fig:convexTreeB}.
	If at any face none of the problems admits a solution we conclude that no \hset for $(G,R)$ exists. 
	If we reached the root $f_0$ and the problem $\Pi(f_0,p_u,p_v)$ where $p_u,p_v$ match the original parity constraints has a solution, we conclude that $(G,R)$ admits a \hset. 
	Note that using this algorithm it is easy to efficiently construct a \hset for $(G,R)$ if one exists.

	The correctness of the algorithm is based on the following fact. 
	Let $uv \in E$ be the connector edge for $f$.
	Then, $uv$ separates $G$ into two subgraphs, one of which contains 
	the vertices 
	in $f$ and 
	all the descendant faces. 
	Thus, $uv$ defines two subinstances of the problem that only overlap in $u$ and $v$.  
	There exists a \hset for $(G,R)$ if and only if those subinstances admit solutions that together satisfy the parity constraints of $u$ and $v$. 
	This implies that the algorithm checks all the possibilities, as it keeps all possible parities of $u$ and $v$ in valid solutions of the subinstance considered so far.
	
	It remains to show that the constructed solutions are valid. 
	Processing face $f$ requires to satisfy all the parity constraints from $R$ on all the vertices in $f$ except for $u$ and $v$. 
	The last check in the root face $f_0$ enforces the parity constraints from $R$ on all its vertices.
	Thus, for each vertex, the last face containing it enforces the correct parity.
	
	The time spent by the algorithm in each face is linear in its size. 
	Thus, given the convexly \hug $C$, $D(G)$ can be computed in linear time and the algorithm runs in $O(|V|)$ time.
\end{proof}

\section{Plane Geometric Paths}
\label{sec:paths}
In this section, we efficiently solve the decision problem for plane geometric paths.

\begin{theorem}
	\label{thm:paths}
	Let $G=(V,E)$ be a plane geometric path and let $R\subseteq V$. There exists an algorithm to decide whether $(G,R)$ admits a \hset in $O(|V|\log |V|)$ time.
\end{theorem}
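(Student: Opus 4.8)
The plan is to combine the linear-time dynamic program of \Cref{thm:HUG} with the fact that a plane geometric path is itself a tree. The tree structure yields a clean source of \hset{s}: if \Vis{G} contains a plane spanning tree $T'$, then for \emph{every} even $R$ the instance $(G,R)$ is solvable. Indeed, in any tree there is a unique edge set whose odd-degree vertices are exactly a prescribed even set $R$ (include an edge iff the number of $R$-vertices on one of its sides is odd); this set is a subforest of $T'$, hence a crossing-free subgraph of \Vis{G} with odd-degree set exactly $R$. Call a path \emph{flexible} if \Vis{G} admits a plane spanning tree; for flexible paths the decision collapses to testing whether $|R|$ is even. This is the ``always admits an augmentation except for obvious negative cases'' half of the dichotomy, and it is the place where we import García~\etal~\cite{GARCIA2014}, who studied exactly when the visibility graph of a plane geometric tree is spanned by a plane tree.

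The heart of the argument is then a structural dichotomy: a plane geometric path is either flexible or admits a convexly \hug. I would make the flexible class explicit (a geometric condition on how the polyline is laid out) and show that every path \emph{outside} this class shares enough structure with a convex path that its vertices can be closed into a simple polygon $C$ containing $G$ with all bounded faces of $G\cup C$ convex. A convex path shows why both branches are genuinely needed: its hull is always a convexly \hug (non-crossing chords subdivide a convex polygon into convex faces), yet by \Cref{lemma:convexPath} it is not flexible, so it must be solved by the dynamic program rather than by the parity test. A subtle point I must verify in this branch is that restricting attention to $\VisHug{C}{G}$, as \Cref{thm:HUG} does, loses no solutions: every plane subgraph of \Vis{G} realizing $R$ can be rerouted to one that lives inside $C$, so that deciding the interior problem decides $(G,R)$.

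Algorithmically the procedure is: (1) decide whether $G$ is flexible and, if not, construct a convexly \hug $C$ together with the weak dual tree of $G\cup C$; (2) if $G$ is flexible, return ``yes'' iff $|R|$ is even; (3) otherwise invoke the linear-time algorithm of \Cref{thm:HUG} on $(G,R)$ with the cycle $C$. Steps (2) and (3) cost $O(|V|)$, so the running time is governed by the geometric preprocessing in (1). Since the vertices of a path need not be in convex position, locating interior vertices within the nested face structure and assembling the convexly \hug requires sorting-type primitives (hull/onion-layer and visibility computations) costing $O(|V|\log|V|)$, which accounts for the claimed bound; this is consistent with the $\Omega(|V|\log|V|)$ remark following \Cref{cor:convexPos}, as the construction of $C$ is the super-linear ingredient.

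The main obstacle is the structural dichotomy of the second paragraph, i.e., proving that a path which is \emph{not} flexible necessarily admits a convexly \hug, and conversely that the obstruction to a plane spanning tree of \Vis{G} is exactly the convex-like configurations. This demands a careful geometric case analysis of how a simple polyline can block a crossing-free spanning connection of its vertices, and then, for each blocking configuration, exhibiting both the enclosing polygon and the ``interior solutions suffice'' rerouting argument. A secondary, but still delicate, obstacle is engineering step (1) so that detecting flexibility and building the convexly \hug and its dual tree stay within $O(|V|\log|V|)$, after which the remaining work is precisely the bottom-up traversal with \invertible faces already established for \Cref{thm:HUG}.
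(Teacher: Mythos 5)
Your proposal follows essentially the same route as the paper: your flexible/non-flexible dichotomy is exactly the paper's distinction between non-\pseudoconvex paths (where the characterization of Garc\'ia~\etal~\cite{GARCIA2014}, i.e., \Cref{thm:spanningTree}, combined with the plane-spanning-tree argument of \Cref{lem:Sufficient1} yields universal happiness) and \pseudoconvex paths (where the tight hull is shown to be a convexly \hug{} in \Cref{lemma:hug}, solutions are rerouted into its interior by \Cref{lemma:tightHull}, and \Cref{thm:HUG} is invoked), and the two obstacles you flag are precisely the paper's \Cref{lemma:hug,lemma:tightHull}. The only minor discrepancy is the source of the $O(|V|\log|V|)$ term: in the paper it comes from ray shooting to test \pseudoconvex{ity} while the tight hull itself is built in linear time along the path, rather than from constructing the \hug{} (and the $\Omega(|V|\log|V|)$ remark after \Cref{cor:convexPos} concerns disconnected graphs, so it does not apply to a connected path).
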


To prove this result, we consider two cases. 
Either the path is \emph{universally happy}, i.e., there exist \hset{s} for any unhappy subset $R$ of even cardinality, or the path has a particular structure (called \emph{\pseudoconvex{ity}}) which allows us to use our algorithm for \Cref{thm:HUG}.

\subsection{Universal Happiness}

Universal happiness requires \Vis{G} to be connected: 
otherwise there exists no \hset when $R$ consist of an odd number of vertices from two different components of \Vis{G}; see \Cref{fig:zigzag}.

\begin{figure}[htb]
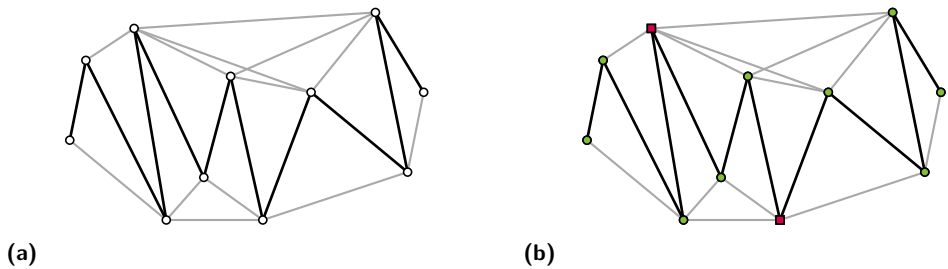

	\centering
	\begin{subfigure}{.45\textwidth}
		\centering
		\includegraphics[page=9]{figures_final}
		\caption{}
		\label{fig:zigzagA}
	\end{subfigure}
	\hfil
	\begin{subfigure}{.45\textwidth}
		\centering
		\includegraphics[page=10]{figures_final}
		\caption{}
		\label{fig:zigzagB}
	\end{subfigure}
	\hfil
	
	\caption{An example in which \Vis{G} is not connected and an even set of two unhappy vertices (square) that does not admit a solution.}
	\label{fig:zigzag}
\end{figure}

In the next lemma we show that the existence of a plane spanning tree in \Vis{G} is sufficient as long as $|R|$ is even. 
The proof uses standard arguments (for example, in the context of T-joins, and also similar to Theorem 3 in \cite{maxcut}).

\begin{lemma}
\label{lem:Sufficient1}
	Let $G=(V,E)$ be a plane geometric graph such that \Vis{G} has a plane spanning tree $T$. 
	Then for all $R\subseteq V$ with $|R|$ even $(G,R)$ admits a \hset.
\end{lemma}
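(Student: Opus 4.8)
The plan is to use the spanning tree $T$ of $\Vis{G}$ as a scaffold and select a subset $H \subseteq T$ of its edges so that exactly the vertices of $R$ have odd degree in $H$. Since $T$ is already plane and $H \subseteq T$, the set $H$ is automatically crossing-free, so the only thing to verify is the parity condition. This is precisely the statement that a tree can realize any even-sized target set of odd-degree vertices as a sub-forest, which is the classical $T$-join existence argument on a tree.

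First I would recall the standard fact: in any tree, for a target set $R$ of even cardinality, there is a unique subset of edges whose odd-degree vertex set is exactly $R$. The cleanest way to establish existence is by a bottom-up (leaf-peeling) argument. I would root $T$ arbitrarily and process vertices from the leaves upward; for each non-root vertex $v$ with parent edge $e_v$, I decide whether to include $e_v$ in $H$ based on whether the parity accumulated at $v$ so far (from its already-decided child edges together with its membership in $R$) needs correcting. Concretely, include $e_v$ in $H$ if and only if the number of chosen child edges at $v$ plus the indicator $[v \in R]$ is odd. This forces every non-root vertex to end up with the correct degree parity, namely odd exactly when $v \in R$.

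The one subtlety — and the step I expect to be the main obstacle to state cleanly rather than technically hard — is verifying the parity at the root. After the leaf-peeling process fixes every non-root vertex correctly, the root's degree parity is determined by the handshaking lemma: since the total number of odd-degree vertices in any graph is even, and all non-root vertices have the prescribed parity, the root has odd degree if and only if the number of non-root vertices in $R$ is odd, i.e., if and only if $|R|$ is odd (adjusting for whether the root lies in $R$). Because $|R|$ is even by hypothesis, the root automatically attains the correct parity as well. This is where the evenness assumption is used and where the argument closes.

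Thus $H$ is a crossing-free subset of $\Vis{G}$ in which exactly the vertices of $R$ have odd degree, so $H$ is a \hset for $(G,R)$, completing the proof. I would present the construction and the root-parity observation explicitly, keeping the leaf-peeling induction brief since it is routine, and emphasizing only that planarity is inherited for free from $T$ and that evenness of $|R|$ is exactly what guarantees consistency at the root.
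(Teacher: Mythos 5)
Your proof is correct, but it takes a different route from the paper's. The paper partitions $R$ into pairs, takes for each pair $(u,v)$ the unique $uv$-path $P_{uv}$ in $T$, and defines $H$ as the set of edges of odd multiplicity in the multiset union of these paths (equivalently, the iterated symmetric difference): each path flips exactly the parities of its two endpoints, and cancelling duplicate edges preserves all parities, so exactly the vertices of $R$ end up odd. You instead root $T$ and run a leaf-peeling induction, deciding each parent edge by the parity accumulated at its lower endpoint, closing the argument at the root via the handshaking lemma. Both arguments confine $H$ to the edges of the plane tree $T$, so crossing-freeness is inherited for free in either case, and both are standard $T$-join-on-a-tree arguments (the paper explicitly flags this). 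The paper's pairing argument needs no rooting and makes it transparent why parities outside the chosen pair are untouched; your leaf-peeling argument localizes exactly where the evenness of $|R|$ is used (the root consistency check), yields a single-pass linear-time construction, and gives as a by-product that $H$ is the \emph{unique} subset of $T$'s edges with odd-degree set $R$ (since a forest contains no nonempty subgraph with all degrees even). Either proof is a valid replacement for the other.
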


\begin{proof}
	Consider a fixed but arbitrary $R\subseteq V$ with $|R|$ even. 
	We partition $R$ into pairs. 
	For each such pair $(u,v)$, we consider the unique $uv$-path $P_{uv}$ in $T$. 
	Augmenting $G$ by $P_{uv}$ satisfies the parity constraints for $u$ and $v$ without altering the happiness status of the other vertices.  
	To construct a \hset we can iterate over the paths $P_{uv}$ swapping their edges. 
	More formally, 
	let $H^m$ denote the edge multi-set of the union of all these paths $P_{uv}$. 
	Deleting two copies of an edge maintains the parities of the degrees of all vertices. 
	Consequently, we can define a \hset $H$ for $(G,R)$ as the subset of $H^m$ consisting on the edges of odd multiplicity; see \Cref{fig:spanningTree} for an illustration.
\end{proof}

\begin{figure}[htb]
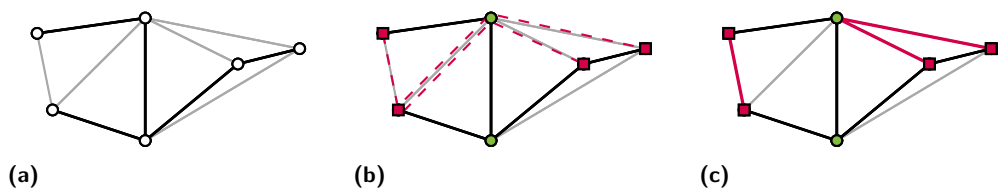

	\centering
	\begin{subfigure}{.3\textwidth}
		\centering
		\includegraphics[page=11]{figures_final}
		\caption{}
		\label{fig:spanningTreeA}
	\end{subfigure}\hfil
	\begin{subfigure}{.3\textwidth}
		\centering
		\includegraphics[page=12]{figures_final}
		\caption{}
		\label{fig:spanningTreeB}
	\end{subfigure}\hfil
	\begin{subfigure}{.3\textwidth}
		\centering
		\includegraphics[page=13]{figures_final}
		\caption{}
		\label{fig:spanningTreeC}
	\end{subfigure}\hfil
	\caption{Illustration for the proof of \Cref{lem:Sufficient1}. {(a) A plane geometric graph $G$ (in black) and a plane spanning tree $T$ of \Vis{G} (in gray). (b) A set $R$ and paths (in dashed red) between (an arbitrary partition of $R$ into) pairs of $R$ in $T$. (c) The symmetric difference of the paths yields a \hset (in red).}}
	\label{fig:spanningTree}
\end{figure}


Garc\'ia~\etal\cite{GARCIA2014} give a characterization of the existence of a plane spanning tree in the visibility graph for the case of a path. 
To use it, we introduce some notation. 
Let $G = (V,E)$ be a plane geometric path, and let $q_0, \dots, q_{k-1}$ be the vertices on the convex hull of $V$ ordered clockwise. 
Adding an edge $e = q_{i}q_{i+1}\in \Vis{G}$ to $G$ creates a unique face (indices are considered modulo~$k$). 
Its boundary is a geometric graph that we call \emph{pocket}: 
it consists of the edge $e$ together with the subpath of $G$ connecting the endpoints of $e$.  
We say that a pocket $P = v_{0},v_{1}, \dots, v_{s}$ with $v_{0} = q_{i}, v_{s} = q_{i+1}$ for some $i$ is \emph{\pseudoconvex} if 
it does not contain an endpoint of $G$ in its interior and it is either convex or  every reflex vertex  $v_{j} \in P$  
satisfies the condition that the rays obtained by extending the edges $v_{j-1}v_{j}$ and $v_{j+1}v_{j}$ (indices considered modulo~$s$) intersect $P$ for the first time at $e = q_{i}q_{i+1}$. 
Note that if $G$ is \pseudoconvex, its two endpoints lie on the convex hull.
\Cref{fig:qConvexA} illustrates a path and its pockets. 
If every pocket of $G$ is \pseudoconvex, we say that $G$ is \pseudoconvex.

\begin{figure}[hbt]
	\centering
		\includegraphics[page=18]{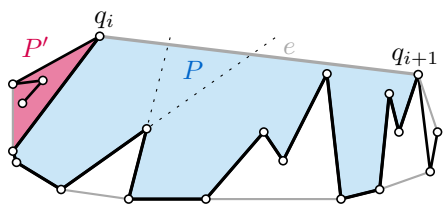}
	\caption{
		Example of a path and its pockets.  While pocket $P$ (with blue interior) is  \pseudoconvex, pocket $P'$ (with red interior) is not.}
	\label{fig:qConvexA}
\end{figure}

We exploit the following fact which has been showed by Garc\'ia~\etal\cite{GARCIA2014} using a slightly different language.

\begin{lemma}[Theorem 3 and Remark 2 in~\cite{GARCIA2014}]\label{thm:spanningTree}
	Let $G=(V,E)$ be a plane geometric path. \Vis{G} has a plane spanning tree if and only if $G$ is not \pseudoconvex.
\end{lemma}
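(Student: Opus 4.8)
The plan is to reduce the global statement to a local analysis of the individual pockets. The first and crucial observation is that the pockets are interior-disjoint bounded faces of $G$ together with its convex hull, and that consecutive pockets are separated by edges of $G$. Since every edge of \Vis{G} joins two points in the interior of the hull, its segment stays inside the (convex) hull, and---by general position together with the visibility constraint---it can neither cross an edge of $G$ nor pass through a vertex of $G$ in its relative interior. Hence each visibility edge lies in the closure of exactly one pocket, touching the boundary only at its two endpoints, and two visibility edges in distinct pockets can never cross. Consequently a plane spanning tree of \Vis{G} is exactly a union of per-pocket non-crossing edge sets whose overall union is a spanning tree, and every crossing constraint is local to a single pocket. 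This turns the problem into a combinatorial assembly question over a tree of pockets that share only boundary vertices, each internal vertex of $G$ being shared by the two pockets flanking it.

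Next I would establish the local visibility lemma that makes \pseudoconvex{ity} the right parameter. Within a pocket the only admissible edges are the hull edge $e$ and the diagonals of the pocket, since the remaining boundary edges belong to $G$ and are forbidden in the tree. I would show that the reflex-ray condition is precisely the condition under which these diagonals fail to attach the interior (non-endpoint) subpath vertices to the endpoints $q_i,q_{i+1}$ without using a forbidden boundary edge: in a \pseudoconvex pocket the diagonals incident to the interior vertices form a nested, pairwise-crossing family (as already seen for a convex polygon, where the two ``escape'' diagonals of the two extreme interior vertices are forced to cross), so at least one interior vertex cannot be attached locally and must instead be served through a neighbouring pocket. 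Conversely, a violation of the condition---either a reflex vertex whose extended edge first meets the subpath rather than $e$, or a path endpoint dangling into the pocket---produces an extra ``long'' visibility edge that breaks this crossing family and lets the pocket attach all of its interior vertices by itself.

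For necessity (\pseudoconvex{} $\Rightarrow$ no plane spanning tree) I would run a charging/no-sink argument on the pocket tree, which is indeed a tree when all pockets are \pseudoconvex (then both endpoints of $G$ lie on the hull and $G\cup C$ is biconnected and outerplanar). The local lemma shows that every \pseudoconvex pocket carries a conserved deficit: to connect its interior vertices with non-crossing diagonals it must export at least one connection to an adjacent pocket. Orienting each exported dependency along the corresponding shared-boundary adjacency yields an orientation in which every pocket has out-degree at least one; but a tree on $m$ nodes has only $m-1$ adjacencies, so by a pigeonhole count this is impossible unless some pocket needs no export and acts as a sink---and only a non-\pseudoconvex pocket can. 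Thus if all pockets are \pseudoconvex there is no sink, the dependencies cannot be satisfied simultaneously without a crossing, and no plane spanning tree exists. For sufficiency I would take the single non-\pseudoconvex pocket as a root, use its defect edge as an escape, and grow the tree by a sweep that always attaches each newly reached vertex to an already-connected visible vertex lying toward the root, maintaining planarity because every attachment stays within one pocket.

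The main obstacle is the local visibility lemma of the second paragraph: faithfully translating the reflex-ray definition of \pseudoconvex{ity} into the statement that the interior diagonals form a blocking (pairwise-crossing) family with a one-unit connection deficit, and that a single violation destroys this family. Pinning down the exact invariant---so that the charging argument has no off-by-one slack and the shared-vertex bookkeeping between adjacent pockets stays consistent---is where the real work lies; the global assembly and the constructive direction are comparatively routine once the per-pocket dichotomy is in hand.
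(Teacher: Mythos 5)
First, a point of comparison: the paper does not prove this lemma at all --- it is imported from Garc\'ia~\etal\cite{GARCIA2014} (Theorem~3 and Remark~2 there), so your attempt is a from-scratch proof of a published theorem, and must be judged on its own.

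Your decomposition in the first paragraph is sound (every edge of \Vis{G} indeed lies in the closure of exactly one pocket), but the necessity direction has a fatal gap: it argues only with a \emph{coverage} obstruction, while the true obstruction is \emph{connectivity}, and the pigeonhole you invoke is invalid. Your dependency arcs are not distinct edges of the pocket tree used once each: two adjacent pockets may each export to the other, so every pocket can have out-degree at least one while the dependencies simply form directed cycles, which contradicts nothing. This is not a hypothetical loophole. Take $v_1=(0,0)$, $v_2=(1,5)$, $v_3=(5,2)$, $v_4=(9,5)$, $v_5=(10,0)$ and the path $v_1v_2v_3v_4v_5$. This path is \pseudoconvex: pocket $A$ with hull edge $v_2v_4$ is the convex triangle $v_2v_3v_4$, and pocket $B$ with hull edge $v_1v_5$ has the single reflex vertex $v_3$, whose extended rays first hit $v_1v_5$. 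Here $\Vis{G}$ consists exactly of $v_2v_4$, $v_1v_5$, $v_1v_3$, $v_3v_5$; these four edges are pairwise non-crossing and \emph{cover every vertex}. Pocket $A$ exports $v_3$ (served in $B$ by $v_1v_3$ and $v_3v_5$), and pocket $B$ exports $v_2$ and $v_4$ (served in $A$ by $v_2v_4$) --- precisely the two-cycle of dependencies your counting declares impossible, realized by an actual plane edge set. What fails is that $\{v_2,v_4\}$ is an entire connected component of \Vis{G}, so no spanning tree exists; no argument that only tracks whether each vertex can be attached \emph{somewhere} can detect this. A correct necessity proof must propagate a connectivity invariant directionally through the face structure (compare the paper's own proof of \cref{thm:univHappinessPaths}, which processes the faces of $G$ together with its tight hull in dual order and records which connector vertices get ``used up''), rather than charge a local deficit per pocket.

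Two further, smaller gaps: your local lemma treats a \pseudoconvex pocket as if its admissible diagonals formed a convex-position crossing family, but a pocket with reflex vertices is a non-convex polygon, and the analysis really requires the tight-hull refinement into convex subpockets (as in \cref{lemma:hug,lemma:tightHull}); and your sufficiency sketch asserts planarity ``because every attachment stays within one pocket,'' yet two edges inside the \emph{same} pocket can of course cross, so the sweep needs an explicit non-crossing invariant. Since you yourself flag the local lemma as unproven, and the global counting built on it is refuted above, the proposal does not constitute a proof.
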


Using \Cref{thm:spanningTree} and our tools of the proof for \Cref{thm:paths}, we show in \Cref{subsec::univHappinessPaths} that the sufficient condition in \Cref{lem:Sufficient1} is also necessary.

\begin{theorem}
\label{thm:univHappinessPaths}
	Let $G=(V,E)$ be a plane geometric path. This geometric graph $G$ is universally happy if and only if \Vis{G} has a plane spanning tree, or equivalently, $G$ is not pseudo-convex.
\end{theorem}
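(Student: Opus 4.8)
\textbf{Proof plan for \Cref{thm:univHappinessPaths}.}
The statement asserts the equivalence of three conditions: $G$ is universally happy; $\Vis{G}$ has a plane spanning tree; and $G$ is not \pseudoconvex. The last equivalence is already handed to us by \Cref{thm:spanningTree}, so the real work is the first equivalence. The plan is therefore to prove two implications: if $\Vis{G}$ has a plane spanning tree then $G$ is universally happy, and conversely if $G$ is universally happy then $\Vis{G}$ has a plane spanning tree. The forward direction is already essentially established: \Cref{lem:Sufficient1} says exactly that a plane spanning tree in $\Vis{G}$ guarantees a \hset for every even $R$, which is the definition of universal happiness. So I would simply cite \Cref{lem:Sufficient1} for this direction.

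The substance is the contrapositive of the converse: if $\Vis{G}$ has \emph{no} plane spanning tree, then $G$ is \emph{not} universally happy, i.e., there is some even set $R$ admitting no \hset. By \Cref{thm:spanningTree}, the hypothesis ``no plane spanning tree'' is equivalent to ``$G$ is \pseudoconvex.'' So I would reformulate the goal as: \emph{if $G$ is \pseudoconvex, then there exists an even set $R$ with no \hset}. First I would note that a \pseudoconvex path has both its endpoints $s,t$ on the convex hull, and that \pseudoconvexity is exactly the structural property that lets me apply the convexly-hugging-cycle machinery of \Cref{thm:HUG}: a \pseudoconvex path admits a convexly \hug (the convex hull boundary, suitably read off from the pockets), and the restricted visibility graph $\VisHug{C}{G}$ coincides with $\Vis{G}$ because no visibility edge can leave the region bounded by the hull. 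Establishing this reduction carefully is the first key step.

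The heart of the argument is then to exhibit the obstructing set $R$ and to prove no \hset exists for it. The natural candidate, by analogy with \Cref{lemma:convexCycle} and its dual-tree argument, is to take $R$ so that the two convex-hull vertices that are forced to remain happy cannot both be avoided. Concretely, I expect the right choice to be $R=\{s,t\}$ or $R$ consisting of the two hull endpoints together with a carefully selected pair, exploiting that in a \pseudoconvex path the weak dual of $G\cup C$ is a \emph{path} (each pocket contributes one face in a linear chain), so the dual tree has exactly two leaves and these pin down which vertices must stay happy. I would run the dynamic program of \Cref{thm:HUG} symbolically on this dual path: because every face is a pocket that is either a convex cycle or a single path, \Cref{lemma:convexCycle,lemma:convexPath} restrict the admissible parities at each connector edge so tightly that no face becomes \invertible in a way that frees up the endpoints, forcing a contradiction when $R$ demands both endpoints change parity.

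The main obstacle I anticipate is making the ``no \hset'' argument fully rigorous rather than appealing to intuition. The delicate point is that \Vis{G} for a \pseudoconvex path may contain many edges beyond the hull, and I must argue that none of them can be used to fix the endpoints' parities without creating a crossing or violating the dual-path structure; the \pseudoconvexity condition (the reflex-vertex ray condition about rays first hitting the hull edge $e$) is precisely what forbids the helpful chords, and I will need to translate that geometric condition into the combinatorial statement that the relevant faces stay \noninvertible. I expect to phrase this as: disconnectedness of $\Vis{G}$ (\Cref{fig:zigzag}) is the easy sub-case giving a trivial obstruction, while in the connected \pseudoconvex case the leaf structure of the dual path yields two non-adjacent vertices that every \hset must leave happy, so choosing $R$ to include exactly those vertices produces an unsatisfiable even instance. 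Verifying that this $R$ is genuinely unsatisfiable for \emph{all} choices of the free edges is the step I would spend the most care on.
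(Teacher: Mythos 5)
Your high-level skeleton matches the paper's: the forward direction is exactly \Cref{lem:Sufficient1}, and the converse is proved in contrapositive form by using \Cref{thm:spanningTree} to reduce to showing that every \pseudoconvex path admits an unsatisfiable even instance. However, both steps you yourself identify as the substance of the proof have genuine gaps. First, the reduction to the \hug machinery is not the triviality you claim. The convex hull boundary is \emph{not} a \hug: a \hug must pass through all of $V$ and make every bounded face of the union convex, whereas the hull misses every non-hull vertex and the pockets of a \pseudoconvex path may contain reflex vertices. The paper instead takes the \emph{tight hull} $T$, shows it is a convexly \hug (\Cref{lemma:hug}), and then proves the genuinely nontrivial \Cref{lemma:tightHull}: if $(G,R)$ has any \hset, it has one inside $\VisT{G}$. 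That requires a careful exchange argument (replacing edges of a \hset that cross $T$, or lie between $T$ and the convex hull, by cycles through reflex vertices), because \Vis{G} really does contain such edges; your assertion that ``no visibility edge can leave the region bounded by the hull'' is true only for the convex hull, i.e., precisely for the cycle to which \Cref{thm:HUG} does not apply.

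Second, and more seriously, your candidate obstruction sets are far too small. Take $G$ a convex path $v_1,\dots,v_n$ (which is \pseudoconvex): by \Cref{lemma:convexPath}, every even $R$ leaving at least one internal vertex happy is satisfiable. Hence $R=\{s,t\}$ always has the \hset $\{st\}$, and every $R$ with $|R|=4$ is satisfiable once $n\ge 7$; the smallest unsatisfiable even set must contain all $n-2$ internal vertices. There simply are no two (or four) vertices that ``every \hset must leave happy''; which vertices remain happy is a choice, and a small $R$ cannot pin that choice down. This is why the paper's construction builds a \emph{large} $R$ by sweeping the dual path of $G\cup T$ from the far leaf $f_k$ to the root $f_0$: in each face it puts all vertices into $R$ (all but one connector vertex if the face has odd size), so that by \Cref{lemma:convexPath} the only local solution is a star centered at the connector vertex incident to the tight hull, which in turn forces the two connector vertices shared with the next face to become unhappy; this invariant propagates along the whole dual path until, in $f_0$, no happy internal vertex is available and \Cref{lemma:convexPath} yields the contradiction. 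Running your dynamic program on $R=\{s,t\}$ would simply (and correctly) report that a \hset exists, so your plan cannot close the argument.
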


\subsection{Tight Hulls and Proof of \Cref{thm:paths}}
Our main tools to prove \Cref{thm:paths} are \Cref{thm:HUG} and \Cref{thm:spanningTree,lem:Sufficient1}. The last ingredient is to bridge the gap between \pseudoconvex{ity} and convexity.

{Let $G = (V,E)$ be a plane geometric \pseudoconvex path and consider a} (\pseudoconvex) pocket $P$ induced by some edge $e = q_{i}q_{i+1} \in \Vis{G}$.  Let $r_{1}, \dots, r_{s}$ be the reflex vertices of $P$ ordered counterclockwise. 
The \emph{tight hull} of $P$ consists of the edges $q_{i}r_{1},r_{1}r_{2}, \dots, r_{s-1}r_{s}, r_{s}q_{i+1}$; see \Cref{fig:qConvexB} for an illustration. 
Each edge on the tight hull of $P$ defines a unique face, creating a \emph{subpocket} of $P$. 
Garc\'ia~\etal\cite{GARCIA2014} show that this tight hull always exists and each subpocket is convex (Lemma 4.i in~\cite{GARCIA2014}). Replacing the convex hull edge of every pocket of $G$ by the tight hull of the pocket, one gets the tight hull of $G$. 

\begin{figure}[hbt]
		\centering
		\includegraphics[page=17]{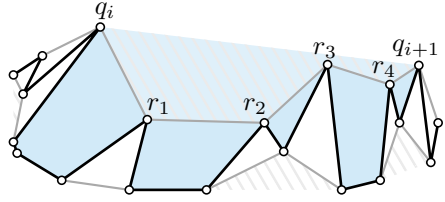}
	\caption{Example of a \pseudoconvex path and the tight hull of a pocket (with blue interior), its subpockets (with darker blue interior), and the tight hull of $G$ forming a \hug (in gray).}
	\label{fig:qConvexB}
\end{figure}

\begin{lemma}
	\label{lemma:hug}
	Let $G=(V,E)$ be a plane geometric \pseudoconvex path. Then the tight hull $T$ of $G$ is a convexly \hug.
\end{lemma}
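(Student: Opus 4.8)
The plan is to show that the tight hull $T$ of $G$ is (i) a plane geometric simple cycle on vertex set $V$, (ii) contains all edges of $G$ in its closed interior (so it is a \hug), and (iii) has the property that all bounded faces of $G\cup T$ are convex (so it is \emph{convexly} hugging). I will treat each property in turn, leaning on the structural facts about pockets and subpockets already established by Garc\'ia~\etal\ (\Cref{thm:spanningTree} and the cited Lemma~4.i).

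First I would argue that $T$ is a simple plane cycle spanning $V$. By definition, $T$ is obtained from the convex hull of $V$ by replacing, within each pocket $P$ induced by a hull edge $e=q_iq_{i+1}$, the single edge $e$ with the chain $q_ir_1, r_1r_2,\dots,r_{s-1}r_s,r_sq_{i+1}$ through the reflex vertices $r_1,\dots,r_s$ of $P$ in order. Since the reflex vertices of distinct pockets are disjoint and each hull vertex $q_i$ appears, the union of these chains over all pockets, together with the hull vertices, visits every vertex of $V$ exactly once and closes up into a single cycle; I would note that the endpoints $s,t$ of $G$ lie on the hull (as remarked just before \Cref{thm:spanningTree} for \pseudoconvex\ paths) so no vertex is omitted. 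Planarity follows because each tight-hull chain lies inside the face bounded by $e$ and the subpath of $G$, and these faces are interior-disjoint across different pockets, so the replacement edges of different pockets cannot cross; within one pocket, the chain is a convex-position-like polyline that does not cross $G$ because every $r_j$ is a vertex of $G$ and the subpockets are convex.

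Next I would verify that $T$ hugs $G$, i.e.\ every edge of $G$ lies in the closed interior of $T$. The key observation is that replacing $e$ by the chain through the reflex vertices can only \emph{shrink} the region toward $G$: the tight hull is, in each pocket, the convex hull of $e$ together with the reflex vertices, so $G$ restricted to the pocket lies between its own edges and the tight-hull chain. Combined across pockets, $G$ lies in the closed interior of $T$. Finally, for convex hugging I would invoke that each subpocket is convex (Lemma~4.i in~\cite{GARCIA2014}): the bounded faces of $G\cup T$ are exactly these subpockets (the regions between a tight-hull edge $r_{j}r_{j+1}$, or $q_ir_1$, $r_sq_{i+1}$, and the corresponding subpath of $G$), each of which is convex by that lemma. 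Hence all bounded faces of $G\cup T$ are convex, which is precisely the definition of a convexly \hug.

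The main obstacle I anticipate is the planarity/hugging step, specifically making rigorous that the tight-hull chains of different pockets do not cross one another or $G$, and that $G$ genuinely sits in the interior. This rests on correctly using the \pseudoconvex{ity} condition — that each reflex vertex's extended edges first re-meet the pocket boundary at $e$ — to guarantee the reflex vertices of a pocket are in convex position as seen from $e$, so that the tight hull is well-defined and interior to the pocket. I would therefore spend most of the care unpacking the definition of \pseudoconvex{ity} and the cited existence result for the tight hull, rather than on the cycle-counting or the convexity of subpockets, which follow directly from the quoted Garc\'ia~\etal\ lemma.
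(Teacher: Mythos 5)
Your decomposition (simple spanning cycle; hugging; convex bounded faces) is the same as the paper's, and your parts (ii) and (iii) are handled essentially as the paper handles them — the hugging property is treated as near-definitional and the convexity of the bounded faces of $G\cup T$ is imported from Lemma~4.i of Garc\'ia~\etal\cite{GARCIA2014}. The genuine gap is in part (i), at exactly the step you dismiss as routine ``cycle-counting'': the claim that the chains through the reflex vertices, together with the hull vertices, visit \emph{every} vertex of $V$. By construction the tight hull passes only through hull vertices of $V$ and reflex vertices of pockets. A non-hull vertex that is \emph{convex} (a ``valley'') in every pocket whose subpath contains it would be skipped by every chain and would lie strictly inside $T$; then $T$ is not a cycle with vertex set $V$, hence not a \hug at all. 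Nothing you invoke rules this out: disjointness of the reflex vertices across pockets and the fact that the endpoints of $G$ lie on the hull say nothing about interior, non-endpoint vertices, and the cited Lemma~4.i (the tight hull exists; subpockets are convex) does not assert that the tight hull is spanning --- a convex subpocket may perfectly well have internal path vertices on its boundary. Whether such vertices are guaranteed to appear on $T$ in some other role (as hull vertices of $V$, or as reflex vertices of the pocket on the other side of the path) is precisely the content of the lemma, so asserting it is circular.

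This is the one place where \pseudoconvex{ity} must be used beyond the quoted facts, and it is what the paper's short proof does: if a vertex $v$ of $G$ were in the strict interior of $T$, it would lie in the interior of the closed curve bounding some pocket; since that pocket's hull edge lies in \Vis{G} and the path cannot cross its own subpath, the whole piece of $G$ containing $v$ --- in particular an endpoint of $G$ --- would be trapped inside that pocket, and a \pseudoconvex pocket contains no endpoint of $G$ in its interior by definition. So no vertex can lie strictly inside $T$, which together with the hugging property forces every vertex onto $T$. Your proof as written establishes that $T$ is a closed curve with $G$ inside it and convex pieces, but not that $T$ spans $V$; and your closing paragraph explicitly deprioritizes this step in favor of the planarity bookkeeping, so the omission is a misjudgment of where the difficulty lies rather than a matter of brevity.
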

\begin{proof}
	 By definition, the edges of $G$ lie in the (closed) interior of $T$. 
	 We now show that~$T$ contains all vertices. 
	 Suppose there exists a vertex $v$ in the strict interior of $T$. 
	 Then $v$ is contained in the interior of some pocket of $G$, guaranteeing that the pocket and thus $G$ are not \pseudoconvex.
	 We note that $T$ is a cycle: 
	 by construction it is connected and each vertex has degree 2. 
	 Finally, the convexity of all bounded faces of $G\cup T$ follows from the convexity of the subpockets.
\end{proof}

Looking for a \hset of a \pseudoconvex path, we now show that we may restrict our attention to the tight hull $T$, more precisely, to $\VisT{G}$, the restriction of \Vis{G} to the tight hull and its interior.
\begin{lemma}
\label{lemma:tightHull}
	Let $G=(V,E)$ be a plane geometric \pseudoconvex path, $T$ its tight hull, and let $R\subseteq V$. 
	If $(G,R)$ allows a \hset $H$ then there also exists a \hset $H'$ within $\VisT{G}$. 
\end{lemma}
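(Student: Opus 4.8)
The plan is to start from an arbitrary \hset $H$ for $(G,R)$ and to reroute, one edge at a time, the edges of $H$ that leave the tight hull, until no edge lies outside $T$. Since $T$ is a convexly \hug{} (\Cref{lemma:hug}), all vertices of $V$ lie on $T$, and hence every edge of $\Vis{G}\setminus\VisT{G}$ lies in the region enclosed between $T$ and the convex hull. That region is the disjoint union, over all pockets, of the \emph{slivers} bounded by a convex-hull edge $e=q_iq_{i+1}$ and the tight-hull chain $q_i r_1\dots r_s q_{i+1}$ of the pocket. Since the subpockets are convex (Lemma 4.i in~\cite{GARCIA2014}), each such chain is a convex chain and each sliver is convex; therefore an outside edge $ab\in H\setminus\VisT{G}$ is a chord of one sliver whose endpoints $a,b$ lie on an arc of $T$.

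First I would make the rerouting step precise. For an outside edge $ab$, let $Z_{ab}$ be the cycle formed by $ab$ together with the sub-arc of the tight-hull chain between $a$ and $b$, and replace $H$ by the symmetric difference $H\triangle Z_{ab}$. Because $Z_{ab}$ is a cycle, every vertex has even degree in it, so the symmetric difference leaves every degree parity unchanged and the odd-degree set stays equal to $R$. This operation deletes $ab$ and toggles only chain edges, each of which lies on $T$ and hence in the closed tight hull; in particular $Z_{ab}$ contributes no new outside edge, so the number of outside edges strictly decreases and the process terminates. Equivalently (and cleanly), I would take the multiset consisting of the inside edges of $H$ together with all replacement arcs and keep the edges of odd multiplicity, exactly as in the proof of \Cref{lem:Sufficient1}; this yields a candidate $H'$ with odd-degree set exactly $R$.

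It then remains to verify two things. First, the replacement edges genuinely belong to $\VisT{G}$: a chain edge $r_jr_{j+1}$ bounds a convex subpocket, so it crosses no edge of $G$ and is not itself an edge of $G$, hence lies in $\Vis{G}$, and since it lies on $T$ it lies in $\VisT{G}$. Second, $H'$ is crossing-free: the retained inside edges are a subset of the crossing-free set $H$; the replacement edges lie on the simple cycle $T$, so they cross neither one another nor any chord contained in the closed interior of $T$. Combining these with the parity invariance established above gives a \hset within $\VisT{G}$.

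I expect the main obstacle to be the bookkeeping when several outside edges of $H$ re-use the same chain edges, so that their toggles cancel, together with the need to guarantee that a chain edge forced into $H'$ never coincides with an edge of $G$. This is precisely where the convex-subpocket structure of the tight hull is indispensable: it ensures that every chain edge is a legitimate visibility edge, and that the symmetric-difference cancellation keeps $H'$ simple and plane. The geometric facts that all vertices lie on $T$ and that each sliver is convex (so outside edges are chords of a convex region) are what make the rerouting both well-defined and planarity-preserving.
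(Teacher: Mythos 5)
Your handling of the edges that lie \emph{entirely} outside the tight hull (the paper's Type~(ii) edges) is essentially the paper's own argument and is fine: symmetric difference with the cycle formed by the edge and the tight-hull arc preserves parities, and edges on $T$ cannot cross anything contained in the closed interior of $T$. The problem is the very first structural claim, on which everything else rests: you assert that because all vertices lie on $T$, \emph{every} edge of $\Vis{G}\setminus\VisT{G}$ lies in the region between $T$ and the convex hull. That is a non sequitur. An edge whose two endpoints lie on $T$ can still \emph{cross} edges of $T$, running partly inside and partly outside the tight hull; a vertex that is a convex vertex of a pocket lies on $T$ only via the chain of the pocket on the \emph{other} side of the path, and a visibility edge leaving it can exit its subpocket through a tight-hull edge into the sliver. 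These are exactly the paper's Type~(i) edges (``edges crossing the tight hull''), and eliminating them is the bulk of the paper's proof. Your rerouting does not apply to them: the two endpoints of such an edge do not in general lie on a single sliver arc of $T$ (an arbitrary arc of $T$ between them may even contain edges of $G$ itself, which may not be placed in a \hset), and other crossing edges still present in $H$ may cross the arc, so planarity is not preserved. The paper instead chooses, among all crossing edges $uv$, one minimizing the number of vertices on the subpath $\pi$ of $G$ between $u$ and $v$, and reroutes it through $uw_u, r_1r_2,\dots, w_vv$; crucially, the edges $uw_u$ and $w_vv$ lie strictly \emph{inside} $T$, not on it, so your ``replacement edges lie on the simple cycle $T$, hence cross nothing'' argument has no counterpart there, and the minimality of $uv$ is what substitutes for it.

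In short, your proposal proves the lemma only under the additional hypothesis that no edge of $H$ crosses $T$ — and note that even the paper's Type~(ii) step is only valid \emph{after} Type~(i) edges are gone, since it needs their absence to guarantee that the tight-hull arc is not crossed by $H$. To rescue your approach you would have to prove that a \pseudoconvex path admits no visibility edge crossing its tight hull at all; that is a substantive geometric statement about \pseudoconvex{ity} (it certainly does not follow from ``all vertices lie on $T$''), you give no argument for it, and the paper does not assert it — it handles such edges instead of excluding them. Two further steps are also asserted rather than proved, and would need genuine arguments: convexity of the subpockets does not by itself imply that each sliver is convex (the chain could \emph{a priori} bend both ways), and the claim that a chain edge never coincides with an edge of $G$ again requires using \pseudoconvex{ity}, not just the existence of the tight hull.
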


\begin{proof}	
	We distinguish two types of \emph{unwanted} edges in $H$: 
	(i) edges crossing the tight hull
	and (ii) edges in the exterior of the tight hull (not on its boundary).

Let $uv$ be an unwanted edge of Type (i) in a pocket $P$ of $G$.  
	Recall that subpockets are convex. 
	Since $uv$ is part of a \hset, it can cross at most two tight hull edges.
	Let $e$ be the edge of the tight hull crossed by $uv$ such that the subpocket $P^\circ$ of $e$ includes $u$. 
By convexity of $P^\circ$, 
vertex $u$ is convex in $P^\circ$ (and in $P$); the other vertex $v$ lies outside $P^\circ$. For an illustration see \cref{fig:tightHullA}.

	\begin{figure}[htb]
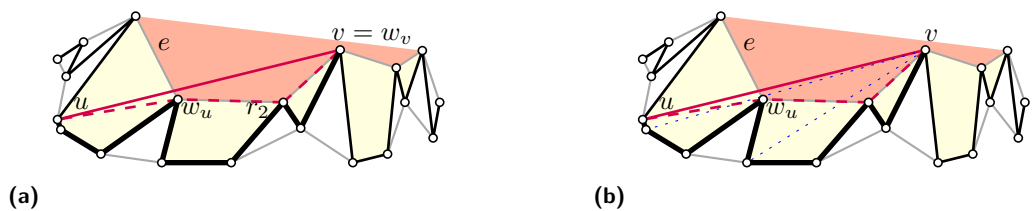

		\centering
		\begin{subfigure}{.45\textwidth}
			\centering
			\includegraphics[page=20]{figures_final}
			\caption{}
			\label{fig:tightHullA}
		\end{subfigure}\hfill
		\begin{subfigure}{.45\textwidth}
			\centering
			\includegraphics[page=21]{figures_final}
			\caption{}
			\label{fig:tightHullB}
		\end{subfigure}
		\caption{
			Illustration for the proof of \Cref{lemma:tightHull}.
			(a) An edge $uv$ of Type (i) crossing the tight hull (in solid red) and the edges replacing it (in dashed red). \linebreak(b) {Edges from $H$ crossing replacement edges would be of Type (i) and define a shorter subpath in $G$, contradicting the choice of $uv$.} }
		\label{fig:tightHull}
	\end{figure}

	Among all such edges of {Type (i)}, we consider the edge $uv$ that minimizes the number of vertices on the subpath $\pi$ of $G$ between $u$ and $v$.
	Let $w_u$  be the first reflex vertex (of $P$) encountered when walking along $\pi$ from $u$ towards $v$, and let $w_v$ be the first reflex vertex (of $P$) encountered when walking along $\pi$ from $v$ towards $u$. 
	If $v$ itself is a reflex vertex, then let $w_v = v$.
	Note that, by convex{ity} of the subpockets, we have $uw_u\in\Vis{G}$ and $vw_v\in\Vis{G}$. 
	
	Let $w_u = r_1, r_2, \dots, r_{\ell} = w_v$ be the 
	reflex vertices of $P$ belonging to $\pi$, see also \cref{fig:tightHullA}.
	We aim to replace the \hset $H$ by $H \triangle \{uw_u, r_{1}r_{2}, \dots, r_{\ell-1}r_\ell,w_vv,vu\}$, where $\triangle$ denotes the symmetric difference. We show that none of the replacement edges intersect an edge of~$H$.
	
	To this end, we show that by the choice of~$uv$, the edges $uw_u$ 
	 and $vw_v$ do not intersect any edge of $H$. 
	 Any edge of $H$ crossing $uw_u$ but not $uv$ has to have one endpoint $x \in \pi[u, w_u]$ distinct from both $u$ and $w_u$, and one endpoint $y$ not in 	$\pi[u,w_u]$. 
	 If $y$ belongs to $\pi[w_u, v]$, then $xy$ is of {of Type~(i)} and thus contradicts the minimality of $uv$. For an illustration, see  \cref{fig:tightHullB}.
	 If $y$ is outside of $\pi$, then $xy$ crosses $uv$, which is not possible since both edges were part of the \hset $H$. 
	 Analogously, we have $vw_v\in\Vis{G}$.

	The situation is similar when considering a replacement edge  $r_{i}r_{i+1}$.
	Suppose that edge $h \in H$ intersects $r_{i}r_{i+1}$.
	Then $h$ is of {Type~(i)}. If both vertices of $h$ belong to $\pi$ then $h$ contradicts the minimality of $uv$. Otherwise, $h$ has exactly one endpoint not belonging to $\pi$, but then $h$ 
	must cross the closed curve $uv \cup \pi$, contradicting the properties of $H$.

	Thus, we replace the \hset $H$ by $H \triangle \{uw_u, r_{1}r_{2}, \dots, r_{\ell-1}r_\ell,w_vv,vu\}$ 
	and obtain a  non-crossing edge set.
	Because $\{uw_u, r_{1}r_{2}, \dots, r_{\ell-1}r_\ell,w_vv,vu\}$ forms a cycle, all degree parities are maintained. 
	Moreover, $H$ contains fewer unwanted edges of Type~(i). 
	Repeating this procedure, we obtain a \hset $H$ without unwanted edges of Type~(i).

	Suppose now that $H$ contains an unwanted edge of Type~(ii) between $u$ and $v$ in a pocket $P$; see \cref{fig:tightHullC}. 
	\begin{figure}[htb]
		\centering
			\includegraphics[page=22]{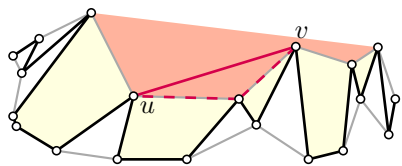}
		\caption{Illustration for the proof of \Cref{lemma:tightHull}. An edge $uv$ of Type (ii) in the interior of the tight hull. }
		
		\label{fig:tightHullC}
	\end{figure} 
	
	Let $\pi$ denote the path on the tight hull between $u$ and $v$ that lies inside $P$. 
	Since $uv$ is not on the tight hull, $\pi$ must contain other vertices apart from $u$ and $v$.
	The absence of Type (i) edges guarantees that $\pi$ does not cross any edge from $H$. 
	Thus, since $\pi \cup \{uv\}$ forms a cycle, replacing $H$ by $H \triangle (\pi \cup \{uv\})$ results in a \hset with fewer edges of Type~(ii). 
	Repeating this procedure, yields the lemma.
\end{proof}

\begin{proof}[Proof of \Cref{thm:paths}]
 First we check whether $G$ is \pseudoconvex: {Using existing ray shooting techniques in polygons~\cite{chazelle1994ray,GuibasHS91}, the first intersection point on a ray can be computed in $O(\log|V|)$ time. For $O(|V|)$ reflex vertices, we obtain $O(|V|\log |V|)$ time, which also bounds the preprocessing time required.} 
 If $G$ is not \pseudoconvex, then \Cref{thm:spanningTree} guarantees the existence of a plane spanning tree in \Vis{G} and $G$ is universally happy by \cref{lem:Sufficient1}.
  If $G$ is  \pseudoconvex, the tight hull $T$ {can be computed in linear time by considering the vertices in the order given by the path. By \cref{lemma:hug},} 
  $T$ is a convexly \hug  and \cref{lemma:tightHull} guarantees a \hset in $\VisHug{T}{G}$  if there exists a \hset in  $\Vis{G}$.
 Hence, we may use the linear-time algorithm given in the proof of \Cref{thm:HUG} to decide whether there exists a \hset.
\end{proof}

\subsection{Proof of \Cref{thm:univHappinessPaths}}
\label{subsec::univHappinessPaths}

We are now ready to show that a plane geometric path is universally happy if and only if its visibility graph has a plane spanning tree.

\begin{proof} [Proof of \Cref{thm:univHappinessPaths}]
	By \Cref{lem:Sufficient1}, we know that the existence of a plane spanning tree in \Vis{G} implies universal happiness of $G$. It remains to show that it is a necessary condition.
 	
	Suppose \Vis{G} does not contain a plane spanning tree. 
	Then, by \Cref{thm:spanningTree}, $G$ is \pseudoconvex. 
	{By \cref{lemma:hug}, the tight hull $T$ of $G$ is a \hug and}
	by \Cref{lemma:tightHull}, we may restrict our attention to edges in $\VisT{G}$. 
	We now construct a 
	set $R$ without a \hset by considering the weak dual tree $D(G)$ of the outerplane graph $G\cup T$. 
	Root $D(G)$ at an arbitrary leaf face $f_{0}$. Note that $D(G)$ is a path on vertices $f_{0},f_{1},\dots, f_{k}$. For an illustration, consider \Cref{fig:pathUnhappy}.
	In particular, we consider the faces in reverse order, beginning from the other leaf face $f_k$ (as the algorithm of \Cref{thm:HUG} would do in order to compute a \hset or decide that none exist).

	\begin{figure}[htb]
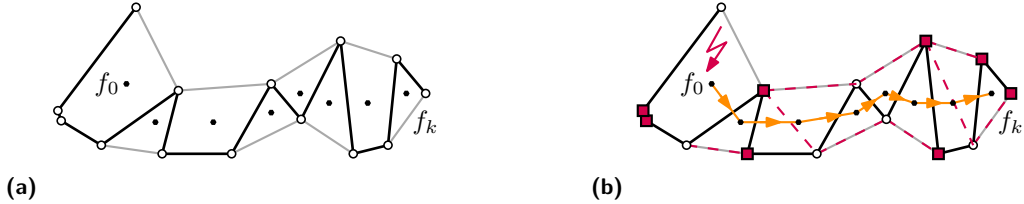

		\centering
		\begin{subfigure}{.45\textwidth}
			\centering
			\includegraphics[page=23]{figures_final}
			\caption{}
			\label{fig:pathUnhappyA}
		\end{subfigure}\hfill
	\begin{subfigure}{.45\textwidth}
		\centering
		\includegraphics[page=24]{figures_final}
		\caption{}
		\label{fig:pathUnhappyB}
	\end{subfigure}\hfill
		\caption{Illustration for the proof of \Cref{thm:univHappinessPaths}. (a) A \pseudoconvex path $G$ and its weak dual tree $D(G)$ of $G\cup T$. (b) A set of unhappy vertices $R$ that has no \hset.}
		\label{fig:pathUnhappy}
	\end{figure}

	As before, we denote the dual edge of $f_{i}f_{i-1}$ as the {connector edge} of face~$f_{i}$. 
	For a face $f \neq f_{0}$, let $V_{f}$ be the set of vertices on its boundary and $uv$ its connector edge. 
	By renaming, we may assume that $u$ is incident to the unique edge shared by $f$ and the tight hull. 
	For $f=f_0$, let $u$ denote the endpoint of $G$ in $f_0$ and $v$ its neighbor in $G$.
	For $f=f_k$, we define $R:=V_{f}$ if $|V_{f}|$ is even and $R:=V_{f} \setminus \{u\}$ if $|V_{f}|$ is odd.
	Clearly, the vertices $V_f\setminus\{u,v\}$ can only be satisfied by edges inside $f$. 
	By \Cref{lemma:convexPath}, there exists a \hset inside $f$ satisfying $V_f\setminus\{u,v\}$ only if $v$ serves as the happy inner vertex. 
	More precisely, the unique way to satisfy $V_{f} \setminus \{u,v\}$ inside $f$ is by adding the star with center $u$ and leaves $V_{f} \setminus \{u,v\}$. 
	To see that, note that any plane triangulation of $f$ has at least two vertices of degree 2. The only choices for these vertices are $v$ and the neighbor of $u$ on the tight hull of $f$.
	It follows that the connector vertices $u$ and $v$ are thus guaranteed to be unhappy after inserting edges in $f_{k}$ to satisfy the vertices of $V_{f_k} \setminus \{u,v\}$.
	
	Processing the faces in the order $f_k,f_{k-1}, f_{k-2},\dots, f_0$, we now add vertices of $G$ to $R$.
	To this end, let $U_f:=\bigcup_{f' \text{ after $f$ in $D(G)$}} V_{f'}$ denote the set of all previously considered vertices. 
	For a face $f$, we add $V_{f}\setminus U_f$ to $R$ if $|V_{f}|$ is even and $V_{f} \setminus \{u\}\cup U_f$ otherwise. 
	This maintains the invariant that $u$ and $v$ of a face $f$ are unhappy in any \hset satisfying $U_f\setminus \{u,v\}$ in the later faces. 
	It follows that all vertices of $V_f$ are unhappy and thus for any face $f\neq f_0$ in any \hset, all vertices in $V_{f} \setminus \{v\}$ must be connected to $u$. 
	Finally, in $f_0$ every vertex in $V_f\setminus \{u\}$ is unhappy. 
	Consequently, no inner vertex is happy, and by \Cref{lemma:convexPath}, there exists no \hset.
\end{proof}

\section{Conclusion}

We showed how to decide in polynomial time whether a plane geometric path can be augmented (with straight-line edges and without creating crossings) to meet some given parity constraints of the vertices. The main question that remains is whether an analogous result exists for plane geometric trees. More restricted versions of this problem are also open, in particular when set of unhappy vertices consists of all the vertices or all the odd-degree vertices.


\bibliography{bib}

\end{document}